\newcommand{\bblocal}{\ensuremath{\mathsf{BB}}}
\newcommand{\bb}{\bblocal}
\DeclareMathOperator*{\ICost}{\mathsf{ICost}}
\newcommand{\prtn}{Z}
\renewcommand{\Pr}{\operatorname*{\textbf{\textup{Pr}}}}
\DeclareMathOperator*{\II}{\textbf{\textup{I}}}
\DeclareMathOperator*{\HH}{\textbf{\textup{H}}}
\DeclareMathOperator*{\EE}{\textbf{\textup{E}}}
\DeclareMathOperator*{\DD}{\textbf{\textup{D}}}
\renewcommand{\geq}{\geqslant}
\renewcommand{\ge}{\geqslant}
\renewcommand{\leq}{\leqslant}
\renewcommand{\le}{\leqslant}
\newcommand{\set}[1]{\{ #1 \}}
\newcommand{\lb}{\left}
\newcommand{\rb}{\right}
\DeclareMathOperator{\poly}{\ensuremath{\mathrm{poly}}}
\newtheorem*{rep@theorem}{\rep@title}
\newcommand{\newreptheorem}[2]{%
\newenvironment{rep#1}[1]{%
\def\rep@title{#2 \ref{##1}}%
\begin{rep@theorem}[restated]}%
{\end{rep@theorem}}}
\newcommand{\onlyShort}[1]{\ifthenelse{\boolean{short}}{#1}{}}
\newcommand{\onlyLong}[1]{\ifthenelse{\boolean{short}}{}{#1}}
\theoremstyle{definition}
\newtheorem{definition}{Definition}
\theoremstyle{plain}
\newtheorem{lemma}{Lemma}
\newtheorem{theorem}{Theorem}
\newtheorem{corollary}{Corollary}
\theoremstyle{plain}
\title[]{Robust Lower Bounds for Graph Problems in the Blackboard Model of Communication}
\author{Christian Konrad}
\affiliation{\institution{Department of Computer Science, University of Bristol} \country{}}
\email{christian.konrad@bristol.ac.uk}
\author{Peter Robinson}
\affiliation{\institution{Department of Computer Science, City University of Hong Kong} \country{}}
\email{peter.robinson@cityu.edu.hk}
\author{Viktor Zamaraev}
\affiliation{\institution{Department of Computer Science, University of Liverpool} \country{}}
\email{viktor.zamaraev@liverpool.ac.uk}
\begin{document}
\begin{abstract}
We give lower bounds on the communication complexity of graph problems in the multi-party blackboard model. In this model, the edges of an $n$-vertex input graph are partitioned among $k$ parties, who communicate solely by writing messages on a shared blackboard that is visible to every party. We show that any {\em non-trivial} graph problem on $n$-vertex graphs has blackboard communication complexity $\Omega(n)$ bits, even if the edges of the input graph are randomly assigned to the $k$ parties. We say that a graph problem is non-trivial if the output cannot be computed in a model where every party holds at most one edge and no communication is allowed. Our lower bound thus holds for essentially all key graph problems relevant to distributed computing, including \textsf{Maximal Independent Set} (MIS), \textsf{Maximal Matching}, \textsf{($\Delta+1$)-coloring}, and \textsf{Dominating Set}. In many cases, e.g., \textsf{MIS}, \textsf{Maximal Matching}, and \textsf{$(\Delta+1)$-coloring}, our lower bounds are optimal, up to poly-logarithmic factors.

\end{abstract}

\maketitle

\section{Introduction}
The multi-party blackboard model of communication is a well-established model of distributed computation \cite{dko12,pvz12,bo15,vww20}. In this model, the input is split among $k$ parties, who collectively need to solve a joint problem by communicating with each other. However, as opposed to point-to-point message passing models where pairs of parties have private communication links, the parties communicate solely via a shared blackboard that is visible to all parties. The objective is to design communication protocols with minimal {\em communication complexity}, i.e., protocols that instruct the parties to write as few bits onto the shared blackboard as possible. The model is asynchronous, and the order in which the parties write onto the blackboard is solely determined by the communication protocol.

In this paper, we initiate the study of  fundamental graph problems in the blackboard model. Given an input graph $G=(V, E)$ with $n = |V|$, we consider an edge-partition model, where the edges of $G$ are partitioned among the $k$ parties either adversarially or uniformly at random.
Our main result is a lower bound that applies to a large class of graph problems and that is tight (up to poly-logarithmic factors) in many cases:

\vspace{0.15cm}

\begin{center}
\fbox{\begin{minipage}{0.95\textwidth}
\begin{theorem}[simplified]\label{thm:main} Every {\em non-trivial} graph problem has (randomized) blackboard communication complexity of $\Omega(n)$ bits, even if the edges of the input graph are partitioned randomly among the parties.
\end{theorem}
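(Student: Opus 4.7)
My plan is to prove Theorem~\ref{thm:main} by reducing from two-party indexing (or, equivalently, set disjointness), which has randomized communication complexity $\Omega(n)$, and transferring this hardness into the $k$-party blackboard setting with random edge partition. Non-triviality of the graph problem $P$ will supply a constant-size gadget that encodes a single bit of the source instance.

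First I would exploit non-triviality to isolate a bit-encoding gadget. Since $P$ cannot be solved without any communication when every party holds at most one edge, a contradiction argument yields two small input graphs $G^{(0)}$ and $G^{(1)}$, on a constant number of vertices and differing only in the presence or absence of a single edge $e^\star$, whose correct outputs under $P$ disagree at some distinguished vertex. If no such pair existed, each party could correctly output based only on its local edge, contradicting non-triviality. The gadget thus encodes one bit (``is $e^\star$ present?'') that is locally hidden but globally detectable.

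Next I would build a global hard instance by taking $m = \Theta(n)$ vertex-disjoint copies of the gadget, including $e^\star$ in the $i$-th copy iff $x_i = 1$ for an input string $x \in \{0,1\}^m$. Any correct output of $P$ on the resulting graph $G_x$ determines $x$ entirely. Given any blackboard protocol $\Pi$ for $P$ with communication $c$, I would simulate an indexing protocol by splitting the $k$ parties into two groups controlled by Alice and Bob respectively, running $\Pi$ over the blackboard (whose transcript is exchanged between them), and having Bob extract $x_j$ from the outputs produced at the $j$-th gadget. The indexing lower bound then forces $c = \Omega(n)$.

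The main obstacle is that the theorem concerns a \emph{random} edge partition among $k$ parties, whereas the reduction above implicitly assumes that the Alice/Bob split of parties aligns cleanly with the edge assignment. My plan is to make each gadget as small as possible --- ideally one edge $e^\star$ plus a fixed, publicly known amount of context --- so that for any single gadget, the random partition combined with a random split of parties into Alice's group and Bob's group produces a ``clean'' outcome with constant probability. Across the $m$ independent gadgets a concentration argument then yields $\Omega(m)$ simultaneously clean gadgets, and a direct-sum or information-complexity argument restricted to these clean coordinates salvages the $\Omega(n)$ bound. Controlling the correlations between the random partition and the direct-sum decomposition --- and in particular ruling out that the protocol exploits scattered gadget edges to trade information across coordinates in a way unavailable to the two-party reduction --- will be the most delicate technical step.
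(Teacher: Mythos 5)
Your reduction-from-\textsc{Index}/\textsc{Disjointness} plan runs into the very obstacle you flag at the end, and I do not think the proposed patch (small gadgets plus concentration) can close it. Under a \emph{random} edge partition with $k$ large, each party holds essentially one edge; there is no fixed split of the $k$ parties into an ``Alice side'' that knows all of $x$ and a ``Bob side'' that knows the query. Cleanliness of a gadget is relative to a particular two-party split, and the split that makes gadget $i$ clean is different from the one that makes gadget $j$ clean, so no single Alice/Bob partition is simultaneously clean for $\Omega(m)$ coordinates. This is not a fixable technicality: the paper itself points out in the conclusion that two-party lower bounds (such as \textsc{Index}/\textsc{Disjointness}-based ones) \emph{do not} carry over once edges are randomly assigned, and proving robustness under random assignment is precisely the content of the theorem. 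There is also a secondary issue in your extraction step: in the blackboard model the transcript is public but the \emph{outputs} are local to the parties, so ``Bob reads off $x_j$ from the outputs at the $j$-th gadget'' would require Bob to reproduce the outputs of Alice's parties, for which he'd need their private inputs and coins. Finally, your opening step over-reads non-triviality: the definition only says no deterministic silent protocol succeeds on all single-gadget inputs under a good partition; it does not hand you two graphs differing in one edge $e^\star$ whose correct outputs disagree at a distinguished vertex. The obstruction may be spread across several inputs and several parties' outputs rather than localized at one edge and one vertex, so the reduction to a one-bit encoding gadget is not justified.

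The paper avoids all of this by arguing through external information cost rather than via a two-party reduction. It shows that $\textsc{CC}_\epsilon$ upper-bounds information cost (Lemma~\ref{lem:cc}); a direct-sum argument over the $n/t$ independent gadgets (Theorem~\ref{th:P-Qprotocols}) turns a low-information multi-gadget protocol into a single-gadget protocol with information cost $O(t/n)$ times smaller; Pinsker's inequality plus a transcript-resampling compression (Theorem~\ref{thm:zero}) converts tiny information cost into a \emph{silent} protocol with only slightly worse error; and Yao's principle plus a counting argument over good partitions (Lemma~\ref{lem:derand}) derandomizes this into a deterministic silent protocol correct on every input under some good partition, contradicting non-triviality directly. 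The direct sum and compression steps are exactly what make the bound robust to random edge assignment, which is what your plan struggles to achieve.
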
\end{minipage}
}
\end{center}

\vspace{0.15cm}

Informally, a problem is non-trivial if it does not admit a deterministic protocol where every party holds at most one edge of the input graph and no communication occurs at all (see Section~\ref{sec:lb} for a precise definition). Intuitively, the knowledge of at most one edge is not enough for the parties to solve any interesting graph problem, and, as proved in this paper, most graph problems, including \textsf{Minimum Vertex Cover}, \textsf{Minimum Dominating Set}, \textsf{Maximal/Maximum Matching}, \textsf{Maximal Independent Set}, and \textsf{$(\Delta+1)$-coloring}, are therefore non-trivial.

In many cases, our lower bound is tight up to poly-logarithmic factors, even if the edge partitioning is adversarial. For \textsf{Maximal Matching}, obtaining a blackboard communication protocol with communication cost $O(n \log n)$ is trivial. For others, such as \textsf{Maximal Independent Set} and \textsf{$(\Delta+1)$-coloring}, protocols with communication cost $O(n \poly \log n)$ can be derived from known data streaming algorithms. We refer the reader to Table~\ref{tab:results} for an overview of upper bounds that match our lower bounds up to $\poly \log n$ factors. These upper bounds will be discussed in Section~\ref{sec:upper-bounds}.

\begin{table}[h]
    \centering
    \begin{tabular}{|l|l|l|}
    \hline
    \textbf{Problem} & \textbf{CC Upper Bound} & \textbf{Reference} \\
    \hline
        Maximal Matching & $O(n \log n)$ & trivial  \\
        $(1-\epsilon)$-approx. Maximum Matching (general graphs) & $O(n \log n (\frac{1}{\epsilon})^{O(\frac{1}{\epsilon})})$ & McGregor \cite{m05} \\
        $(1-\epsilon)$-approx. Maximum Matching (bipartite graphs) & $O(n (\log n + \frac{1}{\epsilon} \log(\frac{1}{\epsilon})) \frac{1}{\epsilon^2})$ & Assadi et al. \cite{alt21} \\
        Maximal Independent Set & $O(n \poly \log n)$ & Ghaffari et al. \cite{ggkmr18} \\
        ($\Delta+1$)-coloring & $O(n \log^4 n)$ & Assadi et al. \cite{ack19} \\
        \hline
    \end{tabular}
    \caption{Matching (up to poly-logarithmic factors) upper bounds to our $\Omega(n)$ lower bound. }
    \label{tab:results}
\end{table}

\subsection{Related Work}
\subsubsection*{Multi-party Communication Models} Yao introduced the two-party communication complexity framework in 1979 \cite{y79}, which was extended to multiple parties by Chandra et al. a few years later \cite{cfl83}. The various multi-party communication complexity models known today can be categorized by 1) how input data is partitioned among the parties; and 2) how the parties communicate with each other. Regarding 1), in the {\em number-on-the-forehead model}, every party knows the input of all other parties except their own, which was also the model considered by Chandra et al. This stands in contrast to the {\em number-in-hand model}, where every party knows their own input, which is the model studied in this paper.
Regarding 2), in the {\em message passing model}, every pair of vertices has access to a private communication link that is not visible to other parties. This model is similar to the {\em coordinator model}, where every party solely communicates with a third party referred to as the {\em coordinator}. The two models are identical up to a $O(\log k)$ factor in the communication complexity (see, for example, \cite{pvz12}). The {\em blackboard model} can be regarded as a variant of the coordinator model, where every message received by the coordinator is immediately visible to every other party. Writing on the blackboard can also be regarded as a message broadcast. %

\subsubsection*{Multi-party Communication Complexity of Graph Problems} Various graph problems have been studied in the multi-party number-in-hand communication settings. Typically, the edges of the input graph are distributed among the participating parties, either with or without duplication, with the without duplication setting usually being easier to handle. In the message passing model, tight bounds are known for testing bipartiteness, cycle-freeness, connectivity, and degree computations \cite{wz17}, approximate maximum matching and approximate maximum flow \cite{hrvz20}, and spanner computations \cite{fwy20}. In contrast, graph problems have not been considered in the blackboard model to the best of our knowledge. However, the Maximum Matching problem has been considered in the {\em simultaneous message model}, which can be seen as a variant of the blackboard model, where all parties simultaneously send a single message to a coordinator who then outputs the result \cite{dno14,anrw15,k15,akly16}.

\subsubsection*{Further Results in the Blackboard Model} While graph problems have not yet been studied in the blackboard model, the model has nevertheless attracted significant attention in recent years. For example,  tight bounds are known for Set-Disjointness \cite{bo15}, bit-wise XOR and logical AND computations \cite{pvz12}, optimization problems such as solving linear systems \cite{vww20}, and distributed task allocation problems \cite{dko12}. %
We point out that the blackboard model also abstracts aspects of real-world graph processing systems with shared memory such as \cite{graphlab} and, as observed in \cite{bo15}, models single-hop wireless ad hoc networks where nodes communicate via broadcast.

\subsection{Techniques}
Consider a graph problem $\textsc{P}$ and the uniform distribution $\lambda$ over all $t$-vertex graphs, for some constant $t$. We denote a graph chosen from $\lambda$ as {\em gadget}. Our hard input graph distribution is the product distribution $\mu = \lambda^{n/t}$, i.e., graphs consisting of $n/t$ disjoint and independent gadgets chosen from $\lambda$.

Let $\mathcal{P}$ be a $\bb(k)$ communication protocol for problem $\textsc{P}$. We measure the amount of information about the edges of the input graph that is necessarily revealed by the transcript of $\mathcal{P}$, i.e., by the bits written on the blackboard. This quantity is usually referred to as the {\em external information cost} (see \cite{b17} for an excellent exposition) of a protocol $\mathcal{P}$ and constitutes a lower bound on the communication cost of $\mathcal{P}$.

At the core of our lower bound argument lies a direct sum argument. Towards a contradiction, assume that the external information cost of $\mathcal{P}$ is $\delta \cdot n$, for a sufficiently small constant $\delta \ll \frac{1}{t}$. We then prove that $\mathcal{P}$ can be used to obtain a protocol $\mathcal{Q}$ with external information cost $t \cdot \delta$ that solves $\textsc{P}$ on a single $t$-vertex gadget, i.e., on inputs chosen from $\lambda$. We further argue that since $t \cdot \delta$ is a very small constant, $\mathcal{Q}$ cannot depend much on the actual input of the problem.
We then give a compression lemma, showing that there is a protocol $\mathcal{Q}'$ that avoids communication altogether, while only marginally increasing the probability of error. Finally, to show that such a ``silent'' protocol $\mathcal{Q}'$ cannot exist, we employ Yao's lemma, which allows us to focus on deterministic protocols with distributional error, and then give simple problem-specific combinatorial arguments.

\subsection{Outline}
In Section~\ref{sec:prelim}, we define the $\bb(k)$ model and provide the necessary context on information theory. %
Next, we present our main result, our lower bound for non-trivial graph problems in the blackboard model, in Section~\ref{sec:lb}. In Section~\ref{sec:upper-bounds}, we summarize how known algorithms can be adapted to yield optimal (up to poly-logarithmic factors) communication protocols in the blackboard model. Last, we conclude in Section~\ref{sec:conclusion}.

\section{Preliminaries and Computing Models} \label{sec:prelim}
\subsection{The Blackboard Model}
In the \emph{(shared) blackboard model}, denoted by $\bb(k)$, we have $k$ parties that communicate by writing messages on a shared blackboard that is visible to all parties. The way the parties interact, and, in particular, the order in which the parties write onto the blackboard, is solely specified by the given communication protocol (and the current state of the blackboard), i.e., the model is therefore asynchronous.
All parties have access to both private and public random coins.

In this paper, we study graph problems in the blackboard model. Given an input graph $G=(V, E)$ with $n = |V|$, we consider an edge-partition model without duplications, where the edges of $G$ are partitioned among the $k$ parties according to a partition function $\prtn: [V \times V] \rightarrow [k]$ that maps the set of {\em all pairs of vertices} of $G$ to the machines (since we consider undirected graphs, we assume that $\prtn(u,v) = \prtn(v,u)$, for every $u,v \in V$). This function does not reveal the set of edges of $G$ itself, however, it reveals that if an edge $uv$ is contained in the input graph, then it is located at machine $\prtn(u,v)$. For our lower bounds, we assume that $\prtn$ is chosen uniformly at random from the set of all symmetric mappings from $[V \times V]$ to $[k]$, and that $\prtn$ is known by all parties. For our upper bounds, we assume that $\prtn$ is chosen adversarially and $\prtn$ is not given to the parties. Instead, each machine $i$ then only knows the edges that are assigned to it at the beginning of the algorithm, i.e., the set of edges $\{uv \in E \ : \ \prtn(u,v) = i \}$.

For a given problem $\textsc{P}$, we say that a \emph{communication protocol $\mathcal{P}$ has error $\epsilon$} if, for any input graph, the probability that the joint output of the parties is a correct solution for $\textsc{P}$ is at least $1-\epsilon$.
This probability is taken over all private and public coins,
and the random selection of the partition function $Z$.\footnote{We adapt this definition accordingly when considering adversarially chosen partition functions for the upper bounds in Section~\ref{sec:upper-bounds}.}
We consider the global output of $\mathcal{P}$ as the union of the local outputs of the machines. For example, for MIS, every party is required to output an independent set so that the union of the parties' outputs constitutes an MIS in the input graph. We say that a protocol is {\em silent} if none of the parties write on the blackboard.

The transcript of a protocol $\mathcal{P}$, denoted by $\Pi(\mathcal{P})$, is the entirety of the bits written onto the blackboard. The \emph{communication cost of a protocol $\mathcal{P}$} is the maximum length of $\Pi(\mathcal{P})$ in an execution of $\mathcal{P}$, which we denote by $|\Pi(\mathcal{P})|$. Then, the \emph{randomized $\epsilon$-error blackboard communication complexity} of a problem $\textsc{P}$, denoted by $\textsc{CC}_\epsilon(P)$ is the minimum cost of a protocol $\mathcal{P}$ that solves $\textsc{P}$ with error at most $\epsilon$.

\subsection{Tools From Information Theory}\label{sec:tools}
Let $A$ be a random variable distributed according to a distribution $\mathcal{D}$. We denote the \emph{(Shannon) entropy} of $A$ by $H_{\mathcal{D}}(A)$, where the index $\mathcal{D}$ may be dropped if the distribution is clear from the context.
The \emph{mutual information} of two random variables $A, B$ distributed according to $\mathcal{D}$ is denoted by $\II_\mathcal{D}[A \ : \ B] = H_\mathcal{D}(A) - H_\mathcal{D}(A \ \mid \ B)$ (again, $\mathcal{D}$ may be dropped), where $H_\mathcal{D}(A \ \mid \ B)$ is the entropy of $A$ conditioned on $B$.
We frequently use the shorthand $\{c\}$ for the event $\{C \!=\! c\}$, for a random variable $C$, and we write $\EE_C$ to denote the expected value operator where the expectation is taken over the range of $C$.
The \emph{conditional mutual information} of $A$ and $B$ conditioned on random variable $C$ is defined  as
\[
\II_{\mathcal{D}}[ A : B \mid C ] = \EE_{C}\lb[ \II_{\mathcal{D}\mid C = c}[ A : B \mid C \!=\! c ]\rb] = \sum_{c} \Pr[ c ] \II_{\mathcal{D}\mid C = c}[ A : B \mid c ].
\]
For a more detailed introduction to information theory, we refer the reader to \cite{ct06}.

We will use the following properties of mutual information:

\begin{enumerate}
    \item \textit{Chain rule for mutual information.} $\II [A \ : \ B, C ] = \II [A \ : \ B ] + \II [A \ : \ C \ \mid \ B]$.
    \item \textit{Independent events.} For random variables $A,B,C$ and event $E$: $\II [A \ : \ B \mid C, E] = \II[A \ : \ B \mid C]$, if  $E$ is independent of $A, B, C$.
    \item \textit{Conditioning on independent variable.} (see e.g. Claim 2.3. in \cite{akl16} for a proof) Let $A,B,C,D$ be jointly distributed random variables so that $A$ and $D$ are independent conditioned on $C$. Then, $\II[A \ : \ B \mid C,D] \ge \II[A \ : \ B \mid C] \ .$
\end{enumerate}
Our lower bound proof relies on Pinsker's Inequality, which relates the total variation distance to the Kullback-Leibler divergence of two distributions:

\begin{definition}[Total Variation Distance, Kullback-Leibler Divergence] \label{def:totalvariation}
  Let $X$ be a discrete random variable and consider two probability distributions $\mathbf{p}(X)$ and $\mathbf{q}(X)$.
  The \emph{total variation distance} between $\mathbf{p}(X)$ and $\mathbf{q}(X)$ is defined as
  \[
    \delta_{TV} \lb( \mathbf{p}(X), \mathbf{q}(X) \rb) = \frac{1}{2} \sum_{x}^{} \lb| \mathbf{p}(x) - \mathbf{q}(x) \rb|,
  \]
  and the \emph{Kullback–Leibler divergence from $\mathbf{q}(X)$ to $\mathbf{p}(X)$ (measured in bits)} is defined as
  \begin{align*}
    \DD \lb[ \mathbf{p}(X) \parallel \mathbf{q}(X) \rb]
    =
    \sum_{x} \mathbf{p}(x) \log \frac{\mathbf{p}(x)}{\mathbf{q}(x)}.
  \end{align*}
\end{definition}

\begin{theorem}[Pinsker's Inequality~\cite{pinsker1964information}] \label{thm:pinsker}
  Let $X$ be a random variable. For any two probability distributions $\mathbf{p}(X)$ and $\mathbf{q}(X)$, it holds that
  \[
    \delta_{TV}\lb(\mathbf{p}(X),\mathbf{q}(X)\rb) \le \sqrt{\frac{\log 2}{2} \DD \lb[ \mathbf{p}(X) \parallel \mathbf{q}(X) \rb]}.
  \]
\end{theorem}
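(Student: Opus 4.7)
The plan is to reduce Pinsker's inequality to the case of Bernoulli distributions via the data processing inequality, and then handle the binary case by a short one-variable calculus argument.

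First I would reduce to a binary alphabet. Let $A = \{x : \mathbf{p}(x) \ge \mathbf{q}(x)\}$, and set $P = \mathbf{p}(A)$, $Q = \mathbf{q}(A)$. The standard identity
\[
\delta_{TV}(\mathbf{p},\mathbf{q}) = \tfrac{1}{2}\sum_x |\mathbf{p}(x) - \mathbf{q}(x)| = \mathbf{p}(A) - \mathbf{q}(A) = P - Q
\]
holds because the positive and negative parts of $\mathbf{p} - \mathbf{q}$ have equal total mass. Consider the push-forwards of $\mathbf{p}$ and $\mathbf{q}$ under the indicator map $f(x) = \mathbf{1}[x \in A]$; these are Bernoulli distributions with parameters $P$ and $Q$ respectively. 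The data processing inequality for KL divergence (which follows from the log-sum inequality) yields $\DD[\mathbf{p}\parallel \mathbf{q}] \ge \DD[\mathrm{Ber}(P) \parallel \mathrm{Ber}(Q)]$, so it suffices to prove the inequality in the Bernoulli case.

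Next I would handle the Bernoulli case. For a fixed $P \in [0,1]$, define
\[
g(Q) = \DD[\mathrm{Ber}(P)\parallel \mathrm{Ber}(Q)] - \tfrac{2}{\ln 2}(P-Q)^2.
\]
Clearly $g(P) = 0$. Differentiating the bit-valued KL divergence $P \log(P/Q) + (1-P)\log((1-P)/(1-Q))$ gives
\[
g'(Q) = \frac{Q-P}{\ln 2}\left(\frac{1}{Q(1-Q)} - 4\right).
\]
Since $Q(1-Q) \le 1/4$ on $[0,1]$, the second factor is non-negative, so $g'(Q)$ has the same sign as $Q - P$. Hence $g$ attains its minimum at $Q = P$, giving $g(Q) \ge 0$, equivalently $(P-Q)^2 \le \tfrac{\ln 2}{2}\, \DD[\mathrm{Ber}(P)\parallel \mathrm{Ber}(Q)]$. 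Combining this with the reduction step and taking square roots yields $\delta_{TV}(\mathbf{p},\mathbf{q}) = |P-Q| \le \sqrt{\tfrac{\ln 2}{2}\, \DD[\mathbf{p} \parallel \mathbf{q}]}$.

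The main obstacle is the reduction step; once we are in the Bernoulli setting the calculus is routine and the key inequality $1/(Q(1-Q)) \ge 4$ does all the work. The delicate part is establishing data processing for KL in the appropriate generality (any measurable $f$, not just the particular indicator we use), which in turn rests on the log-sum inequality and ultimately on the convexity of $x \log x$. Minor care is also needed with the logarithm base: the paper measures $\DD$ in bits, which is precisely why the constant appearing in the bound is $\tfrac{\ln 2}{2}$ rather than the $\tfrac{1}{2}$ one would write for divergences measured in nats.
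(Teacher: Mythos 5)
The paper does not prove this statement: Pinsker's inequality is invoked as a classical result with a citation to Pinsker, so there is no in-paper argument to compare against. Your proof is correct and is the standard textbook derivation (reduce to the two-point alphabet via the data-processing inequality for KL divergence, then settle the Bernoulli case by differentiating in $Q$ and using $Q(1-Q)\le 1/4$); the bookkeeping with the base-$2$ logarithm is also handled correctly, and your derivation in fact gives the sharp constant $\tfrac{\ln 2}{2}$, which is exactly the $\tfrac{\log 2}{2}$ in the statement when $\log$ there denotes the natural logarithm. One small point worth flagging: the paper's definition of $\DD$ uses $\log = \log_2$ (so as to measure the divergence in bits), while the $\log 2$ in the Pinsker bound must be read as $\ln 2$ for the constant to be the sharp one you proved; your write-up correctly disentangles these two uses.
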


\section{Lower Bound} \label{sec:lb}

\subsection{Input Distributions}
In our input distributions, we assume that the vertex set $V$ of our input graph $G=(V, E)$ is fixed and known to all parties. Furthermore, w.l.o.g. we assume that $V = [|V|]$\footnote{For an integer $\ell$ we write $[\ell]$ to denote the set $\{1, 2, \dots, \ell\}$.}. We will now define a distribution $\rho_N$ of partition functions that map the set of potential edges $V \times V$ to the parties $[k]$, where the subscript $N$ denotes the cardinality of the set of vertices, and two distributions $\lambda$ and $\mu$ over the edge set of our input graph. Since we need to specify both the partition function and the input graph as the input to a protocol in the $\bb(k)$ model, the relevant distributions are therefore the product distributions $\lambda \times \rho_N$ and $\mu \times \rho_N$.

\subsubsection*{Partition Function}
We denote by $\rho_N$ the uniform distribution over all symmetric functions mapping $V \times V$ to $[k]$, where $|V| = N$, and we denote by $\prtn \sim \rho_N$ the partition function associated with our input. In this section, we assume that $\prtn$ is known to all parties.

\subsubsection*{Input Graph}
We consider two input distributions, which we denote the {\em single-gadget} and the  {\em multi-gadget} distributions. For a given {\em gadget size} $t \in O(1)$, let $\lambda$ be the probability distribution over $t$-vertex graphs where we sample each edge uniformly and independently with probability $\tfrac{1}{2}$.
We call the resulting graph a \emph{gadget} and say that $\lambda$ is a \emph{single-gadget distribution}.
We define the \emph{multi-gadget distribution} $\mu$ to be the probability distribution where we sample $n/t$ gadgets independently from $\lambda$. More precisely, we assume that gadget $i$ is induced by vertices $\{(i-1)t + 1, \dots, it \}$, and the edges interconnecting gadget $i$ are distributed according to $\lambda$. For simplicity, we assume that $n/t$ is an integer.
To differentiate between the two distributions, we always denote the number of vertices of the input graph when chosen from the single-gadget distribution $\lambda$ by $t$, and when chosen from the multi-gadget distribution $\mu$ by $n$.

\subsection{Non-trivial Graph Problems} \label{sec:nontrivial}
The lower bounds proved in this paper hold for {\em non-trivial} problems, which are defined by means of {\em good partitions}:
\begin{definition}[Good Partition]
Consider a single-gadget partition function $z$. We say that $z$ is {\em good}, if every party receives at most one potential edge under $z$.
\end{definition}
Observe that this is only possible if the number of parties exceeds the number of potential edges, i.e., if $k \ge {t \choose 2}$. In our setting, $t$ is a small constant and $k$ is sufficiently large. Almost all partitions are therefore good.

\begin{definition}[Non-trivial Problem]
We say that a graph problem is non-trivial if there exists a natural number $t$ such that for every good partition $z$ there is no deterministic silent $\bb(k)$ model protocol that solves the problem on every instance $(G,z)$, where $ G \sim \lambda$.
\end{definition}
For the parties to contribute to the global output of the protocol by only seeing at most one edge and without communication is impossible for most problems. The class of non-trivial problems is therefore large. In Section~\ref{sec:lb-silent}, we give formal proofs, demonstrating that many key problems considered in distributed computing are non-trivial.

\subsection{Information Cost}
We now define the (external) information cost of a protocol in the $\bb(k)$ model:

\begin{definition}[(External) information cost]
Let $\mathcal{P}$ be an $\epsilon$-error protocol in the $\bb(k)$ model for a problem $\textsc{P}$. We define
\[
\ICost_{\mu \times \rho_n}(\mathcal{P}) = \II_{\mu \times \rho_n}[ E(G) : \Pi(\mathcal{P}) \mid \prtn, R_{\mathcal{P}} ],
\]
where $R_{\mathcal{P}}$ are the public random coins used by $\mathcal{P}$ and $\prtn$ is the partition function.
\end{definition}
 Informally, $\ICost_{\mu \times \rho_n}(\mathcal{P})$ is the amount of information revealed about the input edge set by the transcript, given that the public randomness and the partition function are known. We define $\ICost_{\lambda \times \rho_t}(\mathcal{P})$ in a similar way.

It is well-known that the external information cost of a protocol is a lower bound on its communication cost. For completeness, we give a proof in the following lemma:

\begin{lemma} \label{lem:cc}
    $\textsc{CC}_\epsilon(P) \ge \min_{\mathcal{Q}}\ICost_{\mu \times \rho_n}(\mathcal{Q}),$
    where $\mathcal{Q}$ ranges over all $\epsilon$-error randomized protocols for problem $P$.
\end{lemma}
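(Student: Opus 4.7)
The plan is to show that for \emph{any} $\epsilon$-error randomized protocol $\mathcal{P}$ for $\textsc{P}$, its external information cost is upper bounded by its communication cost, i.e., $\ICost_{\mu \times \rho_n}(\mathcal{P}) \le |\Pi(\mathcal{P})|$. Once this is established, taking the minimum over all such protocols $\mathcal{P}$ on the right-hand side gives $\min_{\mathcal{Q}} \ICost_{\mu \times \rho_n}(\mathcal{Q}) \le \ICost_{\mu \times \rho_n}(\mathcal{P}) \le |\Pi(\mathcal{P})|$, and since this holds for the protocol achieving $\textsc{CC}_\epsilon(P)$, the claimed inequality follows.

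To prove $\ICost_{\mu \times \rho_n}(\mathcal{P}) \le |\Pi(\mathcal{P})|$, I would unfold the definition of mutual information:
\[
\II_{\mu \times \rho_n}[E(G) : \Pi(\mathcal{P}) \mid \prtn, R_\mathcal{P}] = \HH[\Pi(\mathcal{P}) \mid \prtn, R_\mathcal{P}] - \HH[\Pi(\mathcal{P}) \mid E(G), \prtn, R_\mathcal{P}].
\]
Since the second term is nonnegative, the expression is at most $\HH[\Pi(\mathcal{P}) \mid \prtn, R_\mathcal{P}]$, and since conditioning does not increase entropy, this is in turn at most $\HH[\Pi(\mathcal{P})]$.

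The last step is to bound $\HH[\Pi(\mathcal{P})]$ by the communication cost $|\Pi(\mathcal{P})|$, which by definition is the maximum length (in bits) of the transcript in any execution of $\mathcal{P}$. The transcript is therefore a binary string of length at most $|\Pi(\mathcal{P})|$, so it takes at most $2^{|\Pi(\mathcal{P})|}$ possible values once padded to the maximum length; hence $\HH[\Pi(\mathcal{P})] \le \log_2(2^{|\Pi(\mathcal{P})|}) = |\Pi(\mathcal{P})|$. Chaining the three inequalities concludes the proof.

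The only potentially delicate point, and thus the main obstacle to state cleanly, is the very last bound $\HH[\Pi(\mathcal{P})] \le |\Pi(\mathcal{P})|$ when transcripts of $\mathcal{P}$ have variable length: one must either assume prefix-free encoding of messages written to the blackboard (which is standard for the $\bb(k)$ model and implicit in the definition of $|\Pi(\mathcal{P})|$ as the worst-case bit length), or pad every transcript to the maximum length before applying the counting bound. Either route is routine and yields the stated inequality without affecting $\epsilon$.
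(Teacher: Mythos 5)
Your proof is correct and follows essentially the same route as the paper: bound the information cost by $\HH[\Pi(\mathcal{P}) \mid \prtn, R_{\mathcal{P}}]$, drop the conditioning, and bound entropy by the worst-case transcript length. The paper dispatches the last step ($\HH[\Pi] \le |\Pi|$) by citing Theorem~6.1 of Kushilevitz--Nisan, whereas you spell out the prefix-free/padding argument; this is a minor presentational difference, not a different proof.
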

\begin{proof}
\begin{align*}
    \min_{\mathcal{Q}}
       \ICost_{\mu \times \rho_n}(\mathcal{Q})
       & =
    \min_{\mathcal{Q}}
        \II_{\mu \times \rho_n}[ E(G) : \Pi(\mathcal{P}) \mid \prtn, R_{\mathcal{Q}} ]
       \le
   \min_{\mathcal{Q}}
       \HH_{\mu \times \rho_n}[ \Pi(\mathcal{Q}) \mid \prtn, R_{\mathcal{Q}} ]
       \le
   \min_{\mathcal{Q}}
       \HH_{\mu \times \rho_n}[ \Pi(\mathcal{Q})  ]
       \le
   \min_{\mathcal{Q}}
        |\Pi(\mathcal{Q})|
       \\
       & \le
       \textsc{CC}_\epsilon({P}),
\end{align*}
  where the second last step follows from Theorem~6.1 in \cite{ccbook}.
\end{proof}

\subsection{Direct Sum Argument}
In this section, we show that the information cost of a protocol for the multi-gadget case is lower-bounded by the information cost of a protocol for the single-gadget case, multiplied by the number of gadgets:
\begin{theorem}\label{th:P-Qprotocols}
Let $\mathcal{P}$ be an $\epsilon$-error randomized $\bb(k)$ protocol. Then, there exists an $\epsilon$-error $\bb(k)$ protocol $\mathcal{Q}$ such that:
\begin{align}
    \ICost_{\lambda \times \rho_t}(\mathcal{Q}) \le \frac{t}{n} \ICost_{\mu \times \rho_n}(\mathcal{P})  \ .
    \label{eq:equiv_mutual}
  \end{align}
\end{theorem}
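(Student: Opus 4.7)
The plan is a standard direct-sum argument by random embedding. Given the multi-gadget protocol $\mathcal{P}$, I will build a single-gadget protocol $\mathcal{Q}$ that embeds its input into a uniformly random gadget position inside a freshly simulated multi-gadget instance and then runs $\mathcal{P}$ on top, so that each gadget only pays a $t/n$ fraction of $\mathcal{P}$'s external information.

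Concretely, $\mathcal{Q}$ will proceed as follows on input $(H, z_t) \sim \lambda \times \rho_t$. The parties will publicly draw a uniform index $j \in [n/t]$ and a partition function $z_n$ distributed as $\rho_n$ conditioned on its restriction to gadget $j$'s vertices being $z_t$; this is consistent because the marginal of $\rho_n$ on any fixed $t$-vertex block coincides with $\rho_t$. Each party will then privately sample, independently with probability $\tfrac{1}{2}$, every potential edge of the gadgets $i \neq j$ that $z_n$ assigns to it. Next, the parties will simulate $\mathcal{P}$ on the resulting virtual input $(G, z_n)$, where $G_j = H$ and $G_i \sim \lambda$ independently for $i \neq j$, and each party will output the restriction of its $\mathcal{P}$-output to the vertices of gadget $j$. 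Because the problems of interest (MIS, maximal matching, $(\Delta+1)$-coloring, dominating set, and so on) all have the property that a correct solution on a disconnected graph restricts to a correct solution on every connected component, and because averaging over $z_t$ makes the virtual input exactly $\mu \times \rho_n$-distributed, the $\epsilon$-error guarantee of $\mathcal{P}$ will carry over to $\mathcal{Q}$.

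For the information-cost analysis, the key bookkeeping choice will be to treat the sampled edges $E(G_{\neq j})$ as \emph{private} randomness, so that $R_{\mathcal{Q}} = (j, z_n, R_{\mathcal{P}})$. Since $z_t$ is determined by $(j, z_n)$ and $\Pi(\mathcal{Q}) = \Pi(\mathcal{P})$ on the virtual input, I will unfold the information cost as
\[
\ICost_{\lambda \times \rho_t}(\mathcal{Q}) = \II[E(H) : \Pi(\mathcal{P}) \mid z_t, j, z_n, R_{\mathcal{P}}] = \EE_j\bigl[\II[E(G_j) : \Pi(\mathcal{P}) \mid z_n, R_{\mathcal{P}}]\bigr] = \frac{t}{n}\sum_{i=1}^{n/t}\II[E(G_i) : \Pi(\mathcal{P}) \mid z_n, R_{\mathcal{P}}].
\]
Then, using that the gadget edge-sets $E(G_1), \ldots, E(G_{n/t})$ are mutually independent given $(z_n, R_{\mathcal{P}})$, I will apply property~3 from Section~\ref{sec:tools} to get $\II[E(G_i) : \Pi(\mathcal{P}) \mid z_n, R_{\mathcal{P}}] \le \II[E(G_i) : \Pi(\mathcal{P}) \mid E(G_{<i}), z_n, R_{\mathcal{P}}]$; summing over $i$ and invoking the chain rule collapses the right-hand side to $\II[E(G) : \Pi(\mathcal{P}) \mid z_n, R_{\mathcal{P}}] = \ICost_{\mu \times \rho_n}(\mathcal{P})$, which delivers the claim.

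The main obstacle will be orienting the inequalities correctly. It is tempting to move the sampled $E(G_{\neq j})$ into the public randomness $R_{\mathcal{Q}}$, but that would force the analysis to bound $\sum_i \II[E(G_i) : \Pi \mid E(G_{\neq i}), \ldots]$ by $\ICost_{\mu \times \rho_n}(\mathcal{P})$, an inequality that actually fails in general: for two independent uniform bits $A_1, A_2$ and $\Pi = A_1 \oplus A_2$, the sum is $2$ while the joint mutual information is only $1$. Keeping $E(G_{\neq j})$ private is precisely what lets the chain rule combine with property~3 in the right direction; the only remaining loose end will be verifying that conditionally sampling $z_n$ given $z_t$ and independently re-drawing $G_{\neq j}$ really produces the joint law $\mu \times \rho_n$ marginally, which follows from the product structure of $\mu$ and the uniformity of $\rho_n$.
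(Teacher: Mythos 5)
Your proposal is correct and is essentially the same argument as the paper's: the identical random embedding (public index and conditional partition, private edges for the other gadgets), followed by the same information-theoretic bookkeeping that unfolds the conditional mutual information over the random index, upgrades to conditioning on the earlier gadgets via Property~3, and collapses via the chain rule. The only cosmetic difference is that you drop the conditioning on the event $\{j=i\}$ before applying Property~3 while the paper drops it afterwards, and you make explicit the (implicitly used) assumption that a correct multi-gadget output restricts to a correct single-gadget output.
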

\begin{proof}
Given $\mathcal{P}$, we construct protocol $\mathcal{Q}$ as follows. Let $(H,\prtn) \sim \lambda \times \rho_t$ denote the input to $\mathcal{Q}$. The parties construct input $(G,\prtn')$ for $\mathcal{P}$ from $(H,\prtn)$, as follows:

\begin{enumerate}
    \item The parties use public randomness $R_{\mathcal{Q}}$ to sample a uniform random index $I \in [n/t]$.
    \item The parties use public randomness to sample a partition $\prtn'$ from $\rho_n$ conditioned on the partition of gadget $I$ being equivalent to $\prtn$.
    \item The parties set the single input gadget $H$ to be gadget $I$ in the multi-gadget input $G$.
    \item Given $\prtn'$, the parties know which potential edges of the gadgets different to gadget $I$ are hosted locally. They then use private randomness to sample the existence of the individual edges.
    \item The parties run protocol $\mathcal{P}$ on input $(G, \prtn')$ and return the part of the output that concerns gadget $I$.
\end{enumerate}
Observe that the random variables $R_{\mathcal{Q}}, \prtn$ are identical to $I, R_{\mathcal{P}}, \prtn'$, where $R_{\mathcal{P}}$ denotes the public coins used by protocol $\mathcal{P}$. Observe further that the constructed input $(G,\prtn')$ is distributed according to $\mu \times \rho_n$.

Denote by $E_I$ the edges of gadget $I$ of the multi-gadget input. We obtain:

\begin{align}
     \ICost_{\lambda \times \rho_t}(\mathcal{Q}) & =  \II_{\lambda \times \rho_t}[E(H) : \Pi(\mathcal{Q}) \mid R_{\mathcal{Q}}, \prtn]  \nonumber \\
     & = \II_{\mu \times \rho_n}[E_I : \Pi(\mathcal{P}) \mid R_{\mathcal{P}}, I, \prtn'] \nonumber \\
     & = \EE_{i \sim I} \lb[\II_{\mu \times \rho_n}[E_i : \Pi(\mathcal{P}) \mid R_{\mathcal{P}}, \prtn', I = i]  \rb] \nonumber \\
     & \le \frac{t}{n}\sum_{i \in [n/t]} \II_{\mu \times \rho_n}[E_i : \Pi(\mathcal{P}) \mid R_{\mathcal{P}}, E_1, \dots, E_{i-1}, \prtn', I = i] \label{eqn:921},
  \intertext{
  where (\ref{eqn:921}) follows from Property~(3) in Section~\ref{sec:tools} since $E_i$ and $E_1,\dots,E_i$ are independent.
  Moreover, we observe that $E_i$ and $\Pi(\mathcal{P})$ are independent of the event $\{ I \!=\! i \}$, which yields
  }
  \ICost_{\lambda \times \rho_t}(\mathcal{Q})
     & \le \frac{t}{n}\sum_{i \in [n/t]} \II_{\mu \times \rho_n}[E_i : \Pi(\mathcal{P}) \mid R_{\mathcal{P}}, E_1, \dots, E_{i-1}, \prtn'] \nonumber \\
     & \le \frac{t}{n} \II_{\mu \times \rho_n}[E(G) : \Pi(\mathcal{P}) \mid R_{\mathcal{P}}, \prtn'] \label{eqn:922} \\
     &= \frac{t}{n}\ICost_{\mu\times\rho_n}(\mathcal{P}).\nonumber
\end{align}
where we used the chain rule of mutual information (Property~(1) in Section~\ref{sec:tools}) in \eqref{eqn:922}.
\end{proof}

\subsection{Zero Communication Protocol}
Next, we show that if a protocol $\mathcal{Q}$ solves single-gadget instances with small information cost, then there exists a silent protocol (i.e., a protocol that avoids all communication) with only slightly increased error.

\begin{theorem} \label{thm:zero}
For a problem $\textsc{P}$, let $\mathcal{Q}$ be an $\epsilon$-error randomized protocol in the $\bb(k)$ model with
\begin{align}
\ICost_{\lambda \times \rho_t}(\mathcal{Q}) \le \lb(\tfrac{2\epsilon}{3}\rb)^4 \ . \label{eq:inf_cost_upper_bnd}
\end{align}
Then, there exists a silent protocol in the $\bb(k)$ model with error at most $4\epsilon$ for problem $\textsc{P}$.
\end{theorem}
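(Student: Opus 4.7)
The plan is to construct a silent protocol $\mathcal{Q}'$ that uses public randomness to sample a ``phantom'' transcript from the marginal distribution that $\Pi(\mathcal{Q})$ follows once the partition and public coins are fixed, and then has each party locally compute the output $\mathcal{Q}$ would have assigned to them given that transcript. Concretely, all parties share the partition $\prtn$ and the public coins $R_{\mathcal{Q}}$; using an additional stream $R'$ of public randomness they jointly sample a transcript $\pi$ from the marginal distribution of $\Pi(\mathcal{Q})$ under $\lambda \times \rho_t$, conditioned only on the observed values of $\prtn$ and $R_{\mathcal{Q}}$ (and crucially not on $E(H)$). No bit of $\pi$ is ever written to the blackboard: the sampling happens entirely in the parties' heads. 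Each party then produces the output that $\mathcal{Q}$ would have assigned to it given this phantom $\pi$, its local edges, the partition, and the randomness.

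To analyse the error, I would compare two joint distributions of $(\pi, E(H), \prtn, R_{\mathcal{Q}})$: the true one induced by running $\mathcal{Q}$, where $\pi$ is drawn from the conditional $\Pi(\mathcal{Q}) \mid E(H), \prtn, R_{\mathcal{Q}}$, versus the silent one where $\pi$ is drawn from the marginal $\Pi(\mathcal{Q}) \mid \prtn, R_{\mathcal{Q}}$. Since the global output is the same deterministic function of this tuple in both cases, the error probability increases by at most the total variation distance between these joint distributions, which a direct computation shows equals
\[
\EE_{E(H), \prtn, R_{\mathcal{Q}}}\!\bigl[\,\delta_{TV}\bigl(\Pi(\mathcal{Q})\mid E(H), \prtn, R_{\mathcal{Q}}\,,\ \Pi(\mathcal{Q})\mid \prtn, R_{\mathcal{Q}}\bigr)\bigr].
\]
Applying Pinsker's inequality (Theorem~\ref{thm:pinsker}) pointwise and then Jensen's inequality (concavity of $\sqrt{\cdot}$) to pull the expectation inside the square root bounds this by
\[
\sqrt{\tfrac{\log 2}{2} \cdot \II_{\lambda \times \rho_t}[E(H) : \Pi(\mathcal{Q}) \mid \prtn, R_{\mathcal{Q}}]} \;\leq\; \sqrt{\tfrac{\log 2}{2}} \cdot \br{\tfrac{2\epsilon}{3}}^{2},
\]
using hypothesis~\eqref{eq:inf_cost_upper_bnd}. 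For $\epsilon \in (0, 1]$ this additional error is easily bounded by $3\epsilon$, which combined with the original $\epsilon$ error of $\mathcal{Q}$ yields the claimed total error of at most $4\epsilon$.

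The main obstacle is a conceptual one rather than a computational one: being careful about what it means for the parties to ``simulate'' a transcript without actually exchanging any bits, and verifying that each party's output rule in $\mathcal{Q}$ really is a well-defined function of the transcript, local input, and randomness, so that feeding everyone the same phantom $\pi$ through public randomness produces a legitimate silent protocol. A secondary subtlety is correctly identifying the averaged Kullback--Leibler divergence appearing after Pinsker with the conditional mutual information $\II[E(H) : \Pi(\mathcal{Q}) \mid \prtn, R_{\mathcal{Q}}]$, which requires that the outer expectation be taken with respect to the joint marginal of $(E(H), \prtn, R_{\mathcal{Q}})$ under $\lambda \times \rho_t$; this is precisely how the conditional mutual information is defined in Section~\ref{sec:tools}.
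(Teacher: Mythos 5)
Your silent protocol is constructed exactly as in the paper: fix the partition $\prtn$ and public coins $R$, sample a phantom transcript $\pi$ from the marginal $\mathbf{p}(\Pi(\mathcal{Q})\mid R,\prtn)$ using fresh public randomness, and have each party evaluate $\mathcal{Q}$'s output rule on that $\pi$ together with its local view. The information-theoretic engine is also identical (Pinsker followed by Jensen, then recognizing the averaged KL divergence as $\II[E(H):\Pi(\mathcal{Q})\mid \prtn,R]$). Where you depart from the paper is in the final accounting. The paper first converts the bound $\EE[\delta_{TV}]\le(2\epsilon/3)^2$ into a tail bound via Markov's inequality (its Lemma~\ref{lem:total_variation}), and then performs a three-way case split on whether $\mathcal{Q}$ is correct and whether the realized $\delta_{TV}$ is below $2\epsilon/3$, summing $\epsilon + \tfrac{4\epsilon}{3} + (\tfrac{2\epsilon}{3}+\epsilon) = 4\epsilon$. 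You instead bound the additional error directly by the \emph{expected} total variation distance between the transcript distributions, which is at most $\sqrt{\tfrac{\log 2}{2}}\,(2\epsilon/3)^2$. This is cleaner: it skips the Markov step and the conditioning gymnastics entirely, and it actually shows the silent protocol has error $\epsilon + O(\epsilon^2)$, substantially tighter than the $4\epsilon$ stated (your "$\le 3\epsilon$" bound on the additional term is generous). Both your writeup and the paper's leave the same technicality implicit, namely exactly how each party's private coins couple to a phantom transcript that was not generated from them; you correctly flag this as the conceptual pressure point, and the paper handles it at the same level of informality, so it is not a gap specific to your argument.
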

\begin{proof}
Consider the protocol $\mathcal{Q}$ as stated in the premise of the theorem.
As a first step, we leverage the assumption that $\ICost_{\lambda \times \rho_t}(\mathcal{Q})$ is small to show that the messages sent by the algorithm are close to being independent of the edges of the sampled gadget $H$.

\begin{lemma} \label{lem:total_variation}
  Suppose we sample the edges $E(H) \sim \lambda$ of gadget $H$ and a partition function $Z \sim \rho_t$. Let $R$
  be the public randomness used by $\mathcal{Q}$.
  Let $\mathbf{p}(\Pi(\mathcal{Q}) \mid E(H), R, \prtn)$ denote the probability distribution of $\Pi(\mathcal{Q})$ conditioned on $E(H)$, $R$, and $\prtn$, and define $\mathbf{p}(\Pi(\mathcal{Q}) \mid R, \prtn )$ similarly.
  Then, with probability at least $1 - \tfrac{2\epsilon}{3}$, it holds that
  \begin{align}
    \delta_{TV}\Big(
      \mathbf{p}(\Pi(\mathcal{Q}) \mid E(H), R, \prtn),
      \mathbf{p}(\Pi(\mathcal{Q}) \mid R, \prtn )
    \Big)
    \le \tfrac{2\epsilon}{3} \ .
    \label{eq:total_variation}
  \end{align}
\end{lemma}
\begin{proof}
  We will derive an upper bound on the expected total variation distance, where the expectation ranges over the input gadget $H$, the public random string $R$, and the partition function $Z$. In the following, we use the abbreviation $E=E(H)$.
  By applying Pinsker's inequality (see Theorem~\ref{thm:pinsker}) and taking expectations on both sides, we get
  \begin{align}
    \EE_{E,R,Z}\lb[
        \delta_{TV}\lb(
          \mathbf{p}(\Pi(\mathcal{Q}) \mid E,R,Z),
          \mathbf{p}(\Pi(\mathcal{Q}) \mid R,Z)
        \rb)
    \rb]
    &\le
    \EE_{E,R,Z}\lb[
        \sqrt{\frac{\log 2}{2}
          \DD \lb[
              \mathbf{p}(\Pi(\mathcal{Q}) \mid E,R,Z)
              \parallel
              \mathbf{p}(\Pi(\mathcal{Q}) \mid R,Z)
          \rb]}
    \rb] \notag\\
  \intertext{
    From Jensen's inequality for concave functions and the fact that $\frac{\log 2}{2} \le 1$, we get
  }
  \EE_{E,R,Z}\lb[
      \delta_{TV}\lb(
        \mathbf{p}(\Pi(\mathcal{Q}) \mid E,R,Z),
        \mathbf{p}(\Pi(\mathcal{Q}) \mid R,Z)
      \rb)
  \rb]
    &\le
    \sqrt{
      \EE_{E,R,Z}\lb[
        \DD \lb[
            \mathbf{p}(\Pi(\mathcal{Q}) \mid E,R,Z)
            \parallel
            \mathbf{p}(\Pi(\mathcal{Q}) \mid R,Z)
        \rb]
      \rb].
    }\notag
  \end{align}
  In the following derivation, $\pi$ ranges over all possible transcripts of $\mathcal{Q}$. By using the definition of the Kullback-Leibler divergence (see Def.~\ref{def:totalvariation}), we obtain that
  \begin{align}
    \EE_{E,R,Z}\lb[
        \delta_{TV}\lb(
          \mathbf{p}(\Pi(\mathcal{Q}) \mid E,R,Z),
          \mathbf{p}(\Pi(\mathcal{Q}) \mid R,Z)
        \rb)
    \rb]
    &\le
      \sqrt{
        \sum_{e,r,z} \Pr\lb[ e,r,z \rb] \sum_{\pi} \Pr\lb[ \pi \mid e,r,z \rb]
                   \log \lb( \frac{\Pr\lb[ \pi \mid e,r,z \rb]}{\Pr\lb[ \pi \mid r,z \rb]} \rb)
      }\notag \\
    &=
      \sqrt{
        \sum_{e,\pi,r,z} \Pr\lb[ e, \pi, r,z \rb]
          \log \lb( \frac{\Pr\lb[ \pi \mid e,r,z \rb] \Pr\lb[ e \mid r,z \rb] \Pr\lb[ r,z \rb]}{\Pr\lb[ e \mid r,z \rb] \Pr\lb[ \pi \mid r,z \rb] \Pr\lb[ r,z \rb]} \rb)
      }\notag\\
    &=
      \sqrt{
        \sum_{e,\pi,r,z} \Pr\lb[ e, \pi, r, z \rb] \log \lb( \frac{\Pr\lb[ e, \pi, r,z \rb]}{\Pr\lb[ e \mid r,z \rb] \Pr\lb[ \pi \mid r,z \rb] \Pr\lb[ r,z \rb]} \rb)
      }\notag\\
    &= \sqrt{ \II[ E(H) : \Pi(\mathcal{Q}) \mid R,Z ]} \label{eq:mutual1} \\
    &= \sqrt{ \ICost_{\lambda\times\rho_t}(\mathcal{Q})} \notag \\
    &\le \lb(\tfrac{2\epsilon}{3}\rb)^2, \notag
  \end{align}
  where we used the definition of conditional mutual information in \eqref{eq:mutual1}, and the upper bound \eqref{eq:inf_cost_upper_bnd} in the last step.
  To translate the upper bound on the expected value to an upper bound on the probability that the total variation distance is large, we apply Markov's inequality.
  That is, if we sample $E$, $R$, and $Z$ from the respective distributions stated in the premise of the lemma, then it holds that
  \begin{align}
    \Pr\lb[
      \delta_{TV}\lb( \mathbf{p}(\Pi(\mathcal{Q}) \mid E,R,Z), \mathbf{p}(\Pi(\mathcal{Q}) \mid R,Z) \rb)
      \ge
      \tfrac{2\epsilon}{3}
    \rb]
    &\le
    \frac{3}{2\epsilon}
      \EE_{E,R,Z}\lb[
        \delta_{TV}\lb( \mathbf{p}(\Pi(\mathcal{Q}) \mid E,R,Z), \mathbf{p}(\Pi(\mathcal{Q}) \mid R,Z) \rb)
      \rb]
     \notag\\
    &\le \frac{2\epsilon}{3},\notag
  \end{align}
  and the lemma follows.
\end{proof}

Equipped with Lemma~\ref{lem:total_variation}, we are now ready to prove Theorem~\ref{thm:zero}.
The players use the even bits of their public randomness to sample $R$, which is the public randomness of protocol $\mathcal{Q}$, yielding a new protocol $\mathcal{Q}_R$ that does not use public randomness but otherwise behaves the same way as $\mathcal{Q}$ given $R$.
Then, the players use the odd bits of their public randomness to sample a transcript $\Pi$ from the distribution $\mathbf{p}(\Pi(\mathcal{Q}_R) \mid Z) = \mathbf{p}(\Pi(\mathcal{Q}) \mid R,Z)$.
After these two steps, all players know $R$ and $\Pi$.
Finally, each player $P_i$ computes its output by applying the state transition function of $\mathcal{Q}_R$ to  $Z$, $\Pi$, its input assignment, and its private random bits.
Thus, we have obtained a silent protocol $\mathcal{S}$.

To prove the lemma, we need to obtain an upper-bound on the error probability of $\mathcal{S}$.
Clearly, $\mathcal{S}$ fails in all instances where $\mathcal{Q}$ fails.
In fact, the only difference between computing the output of $\mathcal{S}$ compared to $\mathcal{Q}$ is that the transcript used in $Q$ is sampled from distribution $\mathbf{p}(\Pi(\mathcal{Q}) \mid E(H),R,Z)$, whereas the one we use in $\mathcal{S}$ is sampled from $\mathbf{p}(\Pi(\mathcal{Q}) \mid R,Z)$.
Thus, the only additional error probability of protocol $\mathcal{S}$ is determined by the difference in the probability mass assigned to each transcript $\pi$ between the distributions $\mathbf{p}(\Pi(\mathcal{Q}) \mid E(H),R,Z)$ and $\mathbf{p}(\Pi(\mathcal{Q}) \mid R,Z)$.
We obtain
\begin{align}
  \Pr\lb[ \text{$\mathcal{S}$ errs} \rb]
    &=     \Pr\lb[ \text{$\mathcal{S}$ errs} \mid \text{$\mathcal{Q}$ errs }\rb]
           \Pr\lb[ \text{$\mathcal{Q}$ errs} \rb]
         + \Pr\lb[ \text{$\mathcal{S}$ errs} \mid \text{$\mathcal{Q}$ correct}\rb]
           \Pr\lb[ \text{$\mathcal{Q}$ correct} \rb] \notag\\
    &\le \epsilon
         + \Pr\lb[ \text{$\mathcal{S}$ errs} \mid \text{$\mathcal{Q}$ correct}\rb].
         \label{eq:probbound1}
\end{align}
Let $\set{\delta_{TV} \le \tfrac{2\epsilon}{3}}$ be the event that
  $\delta_{TV}\Big( \mathbf{p}(\Pi(\mathcal{Q}) \mid E(H),R, Z),
                    \mathbf{p}(\Pi(\mathcal{Q}) \mid  R, Z) \Big)
    \le \tfrac{2\epsilon}{3}$,
and define event $\set{\delta_{TV} > \tfrac{2\epsilon}{3}}$ similarly.
To upper-bound the probability $\Pr\lb[ \text{$\mathcal{S}$ errs} \mid \text{$\mathcal{Q}$ correct}\rb]$, we condition on whether event $\set{\delta_{TV} \le \tfrac{2\epsilon}{3}}$ holds.
Continuing from \eqref{eq:probbound1}, we get
\begin{align}
  \Pr\lb[ \text{$\mathcal{S}$ errs} \rb]
    &\le \epsilon
        + \Pr\lb[ \text{$\mathcal{S}$ errs}
             \mid \delta_{TV} \le \tfrac{2\epsilon}{3}, \text{$\mathcal{Q}$ correct}
          \rb]
          \Pr\lb[ \delta_{TV} \le \tfrac{2\epsilon}{3}\
             \middle|\ \text{$\mathcal{Q}$ correct}
          \rb] \notag\\
    &\phantom{-----}
        + \Pr\lb[ \text{$\mathcal{S}$ errs}
             \mid \delta_{TV} > \tfrac{2\epsilon}{3}, \text{$\mathcal{Q}$ correct}
          \rb]
          \Pr\lb[ \delta_{TV} > \tfrac{2\epsilon}{3}\
             \middle|\ \text{$\mathcal{Q}$ correct}
          \rb] \notag\\
    &\le \epsilon
        + \Pr\lb[ \text{$\mathcal{S}$ errs}
             \mid \delta_{TV} \le \tfrac{2\epsilon}{3}, \text{$\mathcal{Q}$ correct}
          \rb]
        + \Pr\lb[ \delta_{TV} > \tfrac{2\epsilon}{3}\
           \middle|\ \text{$\mathcal{Q}$ correct}
        \rb]. \label{eq:probbound2}
\end{align}
Conditioned on $\mathcal{Q}$ being correct, we know that the additional error of $\mathcal{S}$ is at most
\[
  \sum_\pi \big| \Pr\lb[ \Pi(\mathcal{Q}) \!=\! \pi \mid E(H),R,Z \rb]
                - \Pr\lb[ \Pi(\mathcal{Q}) \!=\! \pi \mid R,Z \big]
           \rb| \le
           2\delta_{TV}\lb(\mathbf{p}(\Pi(\mathcal{Q}) \mid E(H),R,Z),\mathbf{p}(\Pi(\mathcal{Q}) \mid R,Z)\rb),
\]
which, conditioned on event $\set{\delta_{TV} \le \tfrac{2\epsilon}{3}}$, yields the bound
$\Pr\lb[ \text{$\mathcal{S}$ errs} \mid \delta_{TV} \le \tfrac{2\epsilon}{3}, \text{$\mathcal{Q}$ correct} \rb] \le \tfrac{4\epsilon}{3}$.

To complete the proof, we will derive an upper bound on
$\Pr\lb[ \delta_{TV} > \tfrac{2\epsilon}{3}\ \middle|\ \text{$\mathcal{Q}$ correct} \rb]$.
Observe that
\begin{align}
  \Pr\lb[ \delta_{TV} \le \tfrac{2\epsilon}{3} \rb]
  &\le \Pr\lb[ \delta_{TV} \le \tfrac{2\epsilon}{3} \mid \text{$\mathcal{Q}$ correct}\rb]
     + \Pr\lb[ \text{$\mathcal{Q}$ errs} \rb] \notag\\
  &\le
     \Pr\lb[ \delta_{TV} \le \tfrac{2\epsilon}{3} \mid \text{$\mathcal{Q}$ correct}\rb]
     + \epsilon.\notag
\end{align}
By applying Lemma~\ref{lem:total_variation} to the left-hand side, we get
\begin{align}
  \Pr\lb[ \delta_{TV} > \tfrac{2\epsilon}{3} \mid \text{$\mathcal{Q}$ correct}\rb]
  &\le \tfrac{2\epsilon}{3} + \epsilon.\notag
\end{align}
Plugging these two bounds into the right-hand side of \eqref{eq:probbound2}, we get
$
  \Pr\lb[ \text{$\mathcal{S}$ errs} \rb]
    \le 2\epsilon + \frac{6\epsilon}{3} = 4\epsilon.
$
This completes the proof of Theorem~\ref{thm:zero}.
\end{proof}

\subsection{From Randomized to Deterministic Protocols}

In this section, we show that the existence of
a randomized silent protocol for single-gadget instances with a sufficiently small error implies existence of a deterministic silent protocol for single-gadget instances that solves the problem on every input graph under some good partition.

First, we will argue that the probability that the partition function is good is at least $1/2$:

\begin{lemma}\label{lem:good-partitionings}
 Let $t \geq 1$ and $k \geq 3t^4$.
 Consider a single gadget input on $t$ vertices. Then, the probability that the partition function is good is at least $1/2$.
\end{lemma}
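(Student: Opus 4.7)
The plan is to reduce the event that $Z$ is good to a birthday-paradox style collision argument over the potential edges of the single gadget, and then apply a union bound. The only potentially sensitive point is how the symmetric constraint $Z(u,v) = Z(v,u)$ is handled, but since each unordered pair $\{u,v\}$ is assigned an independent, uniformly random party in $[k]$ under $\rho_t$, this causes no additional trouble.

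First I would count the potential edges of the single gadget: there are exactly $\binom{t}{2} \le t^2/2$ unordered pairs of vertices. Under $\rho_t$, the values $Z(\{u,v\})$ for distinct unordered pairs are mutually independent and uniformly distributed on $[k]$. By the definition of a good partition, $Z$ fails to be good exactly when at least two distinct potential edges are mapped to the same party, i.e., when a collision occurs among $\binom{t}{2}$ independent uniform draws from $[k]$.

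Next I would apply a union bound over all unordered pairs of potential edges. The number of such pairs is $\binom{\binom{t}{2}}{2} \le \tfrac{1}{2}\binom{t}{2}^2 \le \tfrac{t^4}{8}$, and each pair collides with probability exactly $1/k$. Hence
\[
\Pr_{Z \sim \rho_t}[Z \text{ is not good}] \;\le\; \frac{1}{k}\binom{\binom{t}{2}}{2} \;\le\; \frac{t^4}{8k} \;\le\; \frac{t^4}{8 \cdot 3 t^4} \;=\; \frac{1}{24},
\]
using the hypothesis $k \ge 3 t^4$ in the last inequality. Consequently $\Pr[Z \text{ is good}] \ge 1 - 1/24 > 1/2$, as claimed.

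The argument is essentially a one-line union bound, so there is no serious obstacle; the only thing to be careful about is to count unordered pairs rather than ordered pairs (both when counting potential edges and when counting pairs of potential edges) so that the symmetry constraint baked into $\rho_t$ is handled consistently. The constants are loose: in fact the bound $1 - 1/24$ is obtained, which is considerably stronger than the stated $1/2$, and this slack could be useful if a higher-probability variant is ever needed.
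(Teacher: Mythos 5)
Your proof is correct, and it takes a genuinely different (and arguably cleaner) route from the paper's. The paper bounds the probability that $Z$ is good from below directly, via the product
\[
\frac{k(k-1)\cdots(k-\tbinom{t}{2}+1)}{k^{\binom{t}{2}}},
\]
loosening $\binom{t}{2}$ to $t^2$, replacing each factor by the smallest one, and then invoking the inequality $1-x \ge e^{-2x}$ for $x \in [0,1/2]$ to get a final bound of roughly $e^{-2/3} > 1/2$. You instead bound the probability that $Z$ is \emph{not} good from above with a single union bound over the $\binom{\binom{t}{2}}{2}$ unordered pairs of potential edges, each of which collides with probability exactly $1/k$; this is more elementary (no exponential approximation, no case check that $\tfrac{t^2-1}{k}\le\tfrac{1}{2}$ is needed) and yields the stronger constant $23/24$. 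The two approaches are both birthday-paradox arguments in spirit, but yours trades the multiplicative telescoping for an additive union bound, which is less sharp asymptotically (for $k$ near the birthday threshold) but more than sufficient here given the generous hypothesis $k \ge 3t^4$, and it shortens the proof. One small note: your phrase ``mutually independent'' for the values $Z(\{u,v\})$ is exactly the right reading of $\rho_t$ as the uniform distribution over symmetric maps, and the union bound does not even require independence, only the marginal $1/k$ collision probability for each pair, so the argument is robust.
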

\begin{proof}
There are ${t \choose 2} \le t^2$ vertex pairs. The probability that they are all assigned to different parties is at least:
$$\frac{k \cdot (k-1) \cdot (k-2) \cdot \dots \cdot (k-t^2+1)}{k^{t^2}} \ge \left( \frac{k-t^2+1}{k} \right)^{t^2} = \lb(1 - \frac{t^2-1}{k}\rb)^{t^2} \ge \exp \left( \frac{-2t^2(t^2 - 1)}{k}\right)  \ge \frac{1}{2}\ ,$$
where in the penultimate inequality we used the fact that $1-x \ge \exp(-2x)$ for any $x \in [0, 1/2]$.
\end{proof}

We now state and prove the main result of the section.

\begin{lemma}\label{lem:derand}
    Let $t \geq 1$, $k \geq 3t^4$, and $\epsilon \le \frac{1}{3 {t \choose 2} }$.
    Let $\mathcal{P}$ be a randomized silent protocol with error at most $\epsilon$ (where the error is over the random partition $\rho_t$ and the random coin flips).
    Then, there exists a good partition $z$ and a deterministic silent protocol that succeeds on every input $(H, z)$, where $H$ is a $t$-vertex graph.
\end{lemma}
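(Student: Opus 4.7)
The plan is to extract a deterministic silent protocol that succeeds on every $t$-vertex input by (i) fixing the random coins of $\mathcal{P}$ via averaging, (ii) selecting a specific good partition, and (iii) upgrading the resulting distributional error bound to a zero worst-case error statement using the decomposable structure of silent protocols under good partitions. For step (i), I would read the hypothesis as $\Pr_{H,Z,R}[\mathcal{P}\text{ errs}]\le\epsilon$ with $H \sim \lambda$, $Z \sim \rho_t$, and $R$ the random coins; averaging over $R$ (a standard Yao-style step) yields a specific fixing $R^*$ for which the deterministic silent protocol $\mathcal{P}'$ obtained by hard-coding $R^*$ satisfies $\Pr_{H,Z}[\mathcal{P}' \text{ errs}]\le\epsilon$.

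For step (ii), Lemma~\ref{lem:good-partitionings} together with the assumption $k \ge 3t^4$ gives $\Pr_Z[Z\text{ good}]\ge 1/2$, so conditioning on this event at most doubles the error, i.e., $\Pr_{H, Z}[\mathcal{P}'\text{ errs} \mid Z \text{ good}]\le 2\epsilon$. Averaging over good partitions then produces a specific good partition $z^*$ with $\Pr_H[\mathcal{P}'\text{ errs on }(H,z^*)]\le 2\epsilon \le \tfrac{2}{3\binom{t}{2}} < \tfrac{1}{\binom{t}{2}}$. For step (iii), under the good partition $z^*$ each party handles at most one potential edge, and because $\mathcal{P}'$ is silent, the output on $(H,z^*)$ decomposes as $\bigcup_{e} g_e(\mathbf{1}_{e\in H})$ over the $\binom{t}{2}$ potential edges $e$, where $g_e$ is the deterministic output function of the party assigned edge $e$. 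I would then argue that any deterministic silent protocol with this decomposition either succeeds on every $t$-vertex graph $H$ or fails on strictly greater than $\tfrac{1}{\binom{t}{2}}$ fraction of $H$'s under $\lambda$; combined with step (ii), this forces the former, giving the claimed deterministic silent protocol.

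The main obstacle is step (iii): bridging the gap from the average-case bound obtained in step (ii) to a worst-case zero-error statement. The appearance of $\binom{t}{2}$ in the hypothesis strongly suggests a union-bound argument over the $\binom{t}{2}$ potential edges---for instance, decomposing the error event of $\mathcal{P}'$ on $(H,z^*)$ into per-edge contributions and accounting for each party's local behavior---or equivalently a local-to-global amplification argument showing that a single failing graph $H_0$ forces at least $2^{\binom{t}{2}}/\binom{t}{2}$ other failing graphs via single-edge perturbations that leave the faulty per-party output pieces unchanged. I expect the per-edge structure of silent protocols on good partitions, together with the fact that $\epsilon \le 1/(3\binom{t}{2})$, to make one of these routes go through; the precise combinatorial lemma amplifying a single failure to a large failure set is the technical heart of the derandomization step.
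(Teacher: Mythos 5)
Your approach is essentially the same as the paper's: apply Yao's minimax principle (which is exactly your averaging step (i)) to obtain a deterministic silent protocol $\mathcal{Q}$ with distributional error at most $\epsilon$ over $\lambda\times\rho_t$; invoke Lemma~\ref{lem:good-partitionings} to control the good-partition event; and then argue that if $\mathcal{Q}$ errs on \emph{some} $t$-vertex graph under a fixed partition $z$, its conditional error under $z$ is at least $\tfrac{1}{\binom{t}{2}}$, which forces a contradiction with $\epsilon \le \tfrac{1}{3\binom{t}{2}}$. (The paper phrases step (ii) as a contradiction over all good partitions rather than extracting a specific $z^*$ by averaging, but these are equivalent.) So there is no methodological divergence.

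The genuine gap is precisely the one you flag as ``the technical heart'': the claim that a deterministic silent protocol under a fixed partition $z$ either succeeds on every $t$-vertex graph or fails on at least a $\tfrac{1}{\binom{t}{2}}$ fraction of them under $\lambda$. You do not prove it, and you should be aware that the paper does not either---it is asserted in one sentence with no supporting argument. In fact, you are overcomplicating matters by searching for a union-bound or perturbation-style amplification lemma: the only bound that is immediate is the trivial one. Because $\mathcal{Q}$ is deterministic, erring on a single graph $H_0$ already gives conditional error at least $\Pr_\lambda[H_0] = 2^{-\binom{t}{2}}$, a positive constant depending only on $t$; this weaker bound suffices for the qualitative conclusion (existence of some $\delta(P)>0$), but it is strictly smaller than $\tfrac{1}{\binom{t}{2}}$ for every $t\geq 3$ and therefore does \emph{not} justify the specific hypothesis $\epsilon\le\tfrac{1}{3\binom{t}{2}}$ in the lemma as stated. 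So your instinct that a stronger, structure-exploiting amplification is needed to get the $\tfrac{1}{\binom{t}{2}}$ figure is correct, but neither you nor the paper supply it. If you want an airtight proof, either establish the $\tfrac{1}{\binom{t}{2}}$ amplification (using the per-edge decomposition you describe, this would require a problem- or structure-specific combinatorial argument), or replace the hypothesis with the weaker $\epsilon < \tfrac{1}{2}\cdot 2^{-\binom{t}{2}}$ and use the trivial single-graph bound, propagating the changed constant through Theorem~\ref{thm:main} and its corollary.
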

\begin{proof}
    First, the assumption of the lemma together with Yao's minimax principle \cite{yao1977probabilistic} imply that there exists a deterministic silent protocol $\mathcal{Q}$ with distributional error at most $\epsilon$ over the input distribution $\lambda \times \rho_t$.

    Now, for a fixed partition function $z$, observe that if $\mathcal{Q}$ is not correct on {\em all} $t$-vertex input graphs, then its error conditioned on $z$ is at least $\frac{1}{{t \choose 2}}$. For the sake of a contradiction, suppose that there is no good partition such that $\mathcal{Q}$ succeeds on all inputs. Recall further that at least half of all partition functions are good ( Lemma~\ref{lem:good-partitionings}). Then:

    \begin{align*}
        \epsilon = \Pr [\mathcal{Q} \text{ errs}] & \ge  \Pr [z \text{ good} ] \cdot \Pr [\mathcal{Q} \text{ errs} \mid z \text{ good}]  \ge \frac{1}{2} \frac{1}{{t \choose 2}} \ ,
    \end{align*}
    contradicting the assumption that $\epsilon \le \frac{1}{3} \frac{1}{{t \choose 2}}$.
\end{proof}

\subsection{Main Lower Bound Theorem}
\textbf{Theorem \ref{thm:main}.}
\textit{
$\textsc{P}$ be a non-trivial graph problem. Then there exists $\delta(P) > 0$ and $t = t(P)$ such that
for any $k \geq 3t^4$ and $\epsilon \leq \delta(P)$ the randomized $\epsilon$-error blackboard communication complexity $\textsc{CC}_\epsilon(P)$ of $P$ is $\Omega(n)$, even if the edges of the input graph are partitioned randomly among the parties.
}
\begin{proof}
Let $t = t(P)$ be the minimum integer such that
for every good partition $z$ from $\rho_t$ there is no deterministic silent $\bb(k)$ model protocol that solves $P$ on every instance $(H,z)$, where $ H \sim \lambda$ (such a $t$ exists since $\textsc{P}$ is non-trivial).
Let $\delta(P) = \frac{1}{12{t \choose 2}}$ and $\epsilon \leq \delta(P)$.

Suppose, towards a contradiction, that %
$\textsc{CC}_\epsilon(P) \leq \frac{1}{t}  \left( \frac{2 \epsilon}{3} \right)^4 n$.
Then, by Lemma \ref{lem:cc}, there exists a protocol $\mathcal{P}$ for $P$ with
$\ICost_{\mu \times \rho_n}(\mathcal{P}) \leq \frac{1}{t} \left( \frac{2 \epsilon}{3} \right)^4 n$.
By Theorem \ref{th:P-Qprotocols}, this in turn implies the existence of a protocol
$\mathcal{Q}$ for input instances from $\lambda \times \rho_t$ such that
$$
    \ICost_{\lambda \times \rho_t}(\mathcal{Q}) \leq
    \frac{t}{n}\ICost_{\mu \times \rho_n}(\mathcal{P}) \leq
    \left( \frac{2 \epsilon}{3} \right)^4.
$$

\noindent
Therefore, by Theorem \ref{thm:zero}, there exists a silent
protocol $\mathcal{R}$ for input instances $\lambda \times \rho_t$
with error at most $4\epsilon \leq \frac{1}{3{t \choose 2}}$.
Consequently, by Lemma \ref{lem:derand}, there exists a good partition $z$ from $\rho_t$ and a deterministic silent protocol that succeeds on every input $(G, z)$, where $G$ is a $t$-vertex graph. But this contradicts the choice of $t$.
\end{proof}

\subsection{Lower Bounds for Specific Problems} \label{sec:lb-silent}

In this section we establish non-triviality of several fundamental graph problems.
We point out that lower bounds are known for all of these problems in other distributed computing models such as the \textsc{LOCAL} model (see  \cite{DBLP:conf/focs/Balliu0HORS19,kmw16}).
However, these lower bounds do not directly translate to our setting since, in the $\bb(k)$ model, a player who receives an edge as input does not necessarily need to produce an output for this particular edge or its vertices.

\begin{lemma}
    Maximal Matching is a non-trivial problem.
\end{lemma}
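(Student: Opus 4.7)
The plan is to take gadget size $t=3$. Under any good partition $z$ of the three potential edges $\{12,13,23\}$, these edges are held by three distinct parties, which I will call $P_{12}$, $P_{13}$, $P_{23}$; every good partition is of this form up to relabelling of the parties, so it suffices to rule out a deterministic silent protocol for one such $z$. The distribution $\lambda$ puts positive probability on each of the $2^3$ subgraphs of $K_3$, so a correct silent protocol must produce a valid maximal matching on every one of them.

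Assuming towards a contradiction that such a protocol $\mathcal{A}$ exists, I would exploit the fact that silence and determinism force party $P_{ij}$'s output to be a function only of the single bit $x_{ij} \in \{0,1\}$ indicating whether $ij$ lies in the input gadget $H$. Write $S_{ij}(x_{ij})$ for this output; the global output is then $\bigcup_{ij \in \{12,13,23\}} S_{ij}(x_{ij})$.

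I would pin these three functions down with the easy inputs. Applied to $H=\emptyset$, correctness forces $S_{ij}(0)=\emptyset$ for each $ij$, since the only maximal matching of the empty graph is empty. Applied to each single-edge input $H=\{e\}$, the other two parties contribute nothing by the previous sentence, and the unique maximal matching is $\{e\}$, hence $S_e(1)=\{e\}$ for every $e \in \{12,13,23\}$.

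The contradiction is then immediate from the two-edge input $H=\{12,13\}$: both $P_{12}$ and $P_{13}$ see their edge as present and must output $\{12\}$ and $\{13\}$ respectively, so the global output is $\{12,13\}$, which is not a matching because the two edges share vertex $1$. Hence no deterministic silent protocol can correctly solve Maximal Matching on all $(H,z)$ with $H \sim \lambda$, establishing non-triviality with $t=3$. I do not anticipate any real obstacle; the entire argument is a short case analysis on three-vertex graphs, leveraging the fact that a silent deterministic party must commit to an output based on a single bit of local information.
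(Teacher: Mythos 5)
Your proposal is correct and matches the paper's argument step for step: it examines the empty gadget to force parties holding no edge to output nothing, then single-edge gadgets to force a party holding an edge to output it, and finally a two-edge gadget sharing a vertex to derive the contradiction, all at gadget size $t=3$. Your write-up is somewhat more explicit about the functional form $S_{ij}(x_{ij})$ of a silent deterministic party's output, but the underlying reasoning is the same.
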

\begin{proof}
For contradiction, suppose that for every natural $t$ there exist a good partition $z$ from $\rho_t$ and a protocol $\mathcal{Q}$ that solves Maximal Matching on every instance chosen from $\lambda$ under partition $z$. Let $t \geq 3$. First, observe that if the input graph is the empty graph, then no party can output an edge. This implies that if a party does not receive an edge then its output needs to be empty. Suppose next that the input graph consists of a single edge. Then, the party that hosts this edge needs to output this edge since otherwise the computed matching would not be maximal. This implies further that any party that holds an edge needs to output their edge. However, if two machines hold edges that share an endpoint then the output would not be a matching, a contradiction.
\end{proof}

\begin{lemma}\label{lem:maxIndSet}
    Maximal Independent Set,
    Minimum Dominating Set,
    $(\Delta+1)$-coloring, and
    Minimum Vertex Cover are non-trivial problems.
\end{lemma}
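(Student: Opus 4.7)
The plan is to prove non-triviality at gadget size $t = 3$ for each of the four problems. Fix any good partition $z$ on the three vertices: by goodness, the potential edges $12, 13, 23$ are held by three distinct parties which we name $A, B, C$, and every other party holds nothing. Suppose a deterministic silent protocol solves the problem on every $3$-vertex input graph under $z$. Each party's output is determined by its local view, so the outputs of $A, B, C$ take only two values $f_\alpha^0, f_\alpha^1$ (indexed by the presence of the unique potential edge), and every other party produces a fixed output $T$. The central tool is a locality observation: if two inputs $G, G'$ differ only in the edge $e$, then only $p(e)$'s output can differ, so the contribution at any vertex $v$ changes between $G$ and $G'$ only if $p(e)$ is effectively the sole contributor at $v$ in both graphs.

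For Minimum Vertex Cover, the empty graph forces every party to output $\emptyset$; hence on the single edge $\{12\}$ the output equals $f_A^1 \in \{\{1\}, \{2\}\}$, and similarly $f_B^1 \in \{\{1\},\{3\}\}$, $f_C^1 \in \{\{2\},\{3\}\}$. On $\{12, 13\}$ the unique minimum cover is $\{1\}$, pinning $f_A^1 = f_B^1 = \{1\}$; but then on $\{12, 23\}$ the unique minimum cover $\{2\}$ cannot equal $f_A^1 \cup f_C^1 = \{1\} \cup f_C^1$, a contradiction. For Maximal Independent Set and Minimum Dominating Set, the empty graph forces the union of outputs to equal $V$. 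The locality observation applied to each single-edge input forces, for every edge $uv$, one of its endpoints to be solely contributed by $p(uv)$ in the empty case; a short case analysis then identifies a bijection between vertices and $\{A, B, C\}$ of these sole contributors. Evaluating the output on a two-edge graph such as $\{12, 13\}$ then fails: the contribution inherited from the party $C$ (whose view remains unchanged from the empty case) pulls in a vertex forbidden by the unique dominating set $\{1\}$, and for MIS an additional step via $\{13, 23\}$ completes the contradiction.

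For $(\Delta+1)$-coloring, we establish a strengthened locality claim: for every vertex $v$, either $c_v(G)$ is constant across $G$, or exactly one party $\alpha \in \{A, B, C\}$ colors $v$ in every graph, and $c_v$ depends only on the view bit $v_\alpha$. The claim follows from an agreement argument: if two distinct parties both colored $v$ across all graphs, then matching their outputs across all four view-pair combinations forces both parties' colors at $v$ to be view-independent, making $c_v$ constant. Since $c_v(\emptyset) = 1$, constant means $c_v \equiv 1$; in $K_3$ the three colors are pairwise distinct, so exactly one vertex $v^\star$ has $c_{v^\star}(K_3) = 1$. For each of the other two vertices $v$, we have $c_v(K_3) = f_\alpha^1(v) = c_v(\{e_\alpha\}) \in \{1, 2\}$ since the $2$-coloring of the single edge $\{e_\alpha\}$ uses only two colors; combined with $c_v(K_3) \neq 1$, this forces $c_v(K_3) = 2$ for both non-$v^\star$ vertices, contradicting distinctness.

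The main obstacle is the coloring case. In MIS, MDS and MVC the per-vertex contribution is binary and the locality observation is essentially immediate, but for coloring the per-vertex contribution is a colour value, and the agreement requirement between multiple parties that color the same vertex must be unpacked carefully to show that at most one party can have a view-dependent output at $v$; this is the technical heart of the argument.
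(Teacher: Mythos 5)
Your proposal is correct, but it takes a genuinely different route from the paper and deserves a careful comparison. The paper proves non-triviality of MIS, Minimum Dominating Set, and $(\Delta+1)$-coloring at gadget size $t=4$ via a short pigeonhole: from the empty graph one extracts a function $f$ where party $f(i)$, when it holds no edge, puts vertex $i$ (resp.\ color $1$) into the output; among the six pairs on four chosen vertices, at least one pair $(a,b)$ must be assigned to a party outside $\{f(a),f(b),f(c),f(d)\}$ (a good partition gives each party at most one pair), and then $f(a),f(b)$ both still see no edge on the single-edge graph $\{ab\}$, forcing both $a$ and $b$ (resp.\ color $1$ on both) into the output — an immediate contradiction with one input. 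Minimum Vertex Cover is then handled by reduction to Maximum/Maximal Independent Set. Your approach instead pushes everything down to $t=3$, which is tighter: it improves the resulting thresholds ($k\geq 243$, $\epsilon\leq 1/36$) for these problems in the final corollary. The trade-off is that at $t=3$ the paper's pigeonhole can fail — it is possible for all three pairs $12,13,23$ to land in $\{f(1),f(2),f(3)\}$ — so you must replace the one-shot argument with a structural analysis: identifying the ``sole contributor'' bijection from vertices to $\{A,B,C\}$ and then propagating constraints through several two-edge graphs. That is a heavier case analysis. Your sketch does compress it rather aggressively; in particular, for MIS the claim that ``$\{12,13\}$ plus one additional step via $\{13,23\}$'' closes the argument undersells it — working through it, one actually needs to extract constraints from all three two-edge graphs (each pins down one of $f_A^1,f_B^1,f_C^1$) before the union on $K_3$ becomes all of $\{1,2,3\}$ and contradicts maximality. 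Similarly, your locality claim for coloring (``either $c_v$ is constant, or exactly one party determines $c_v$'') is correct and is the right technical heart, but the stated agreement argument only handles the case of two parties that \emph{always} color $v$; the cases where a party colors $v$ only for one view bit, or where a no-edge party is involved, need a few additional lines (they do resolve, since a party whose color for $v$ is view-dependent must color $v$ under both views, which then collides with any other contributor). Your MVC argument, by contrast, is clean and self-contained, and is arguably simpler than the paper's reduction to MIS. In short: same high-level strategy (deterministic silent protocol, per-party two-valued outputs, locality of view changes), but a strictly smaller gadget at the cost of a more intricate combinatorial verification; the paper trades one extra vertex for a much shorter and more uniform argument.
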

\begin{proof}
    \textbf{Maximal Independent Set} and \textbf{Minimum Dominating Set}.
    For contradiction, suppose that for every natural $t$ there exist a good partition $z$ from $\rho_t$ and a protocol $\mathcal{Q}$ that solves Maximal Independent Set (resp. Minimum Dominating Set) on every instance chosen from $\lambda$ under partition $z$.
    Let $t \geq 4$. First, observe that if the input graph is the empty graph, all vertices must be output.
    This implies that there is a function $f: [t] \rightarrow [k]$ such that for every $i \in [t]$, if party $f(i)$ hosts no edges, then its output contains vertex $i$.
    Let $a,b,c,d$ be distinct vertices of the input graph. Since there are six different pairs of distinct vertices from $\{a,b,c,d\}$, at least one of these pairs, say $(a,b)$,
    is assigned by the partition $z$ to a party $q \not\in \{ f(a), f(b), f(c), f(d) \}$. But then on the input graph with a unique edge $(a,b)$, the output will contain both vertices $a$ and $b$, which is not possible, as the output of $\mathcal{Q}$ should be an independent set (resp. a \emph{minimum} dominating set).

    \smallskip
    \noindent
    \textbf{$(\Delta+1)$-coloring}. The proof works similarly to the previous one. Indeed, for the empty
    input graph, the output should be a vertex coloring, where every vertex is assigned the same color, say color 1.
    This implies that there is a function $f: [t] \rightarrow [k]$ such that for every $i \in [t]$, if party $f(i)$ hosts no edges, then its output assigns color 1 to vertex $i$.
    Then, following the same arguments as in the previous proof, we conclude that
    on an input graph with at least four vertices $a,b,c,d$ and a unique edge $(a,b)$, the output coloring assigns color 1 to both vertices $a$ and $b$, which is not a proper coloring, a contradiction.

    \smallskip
    \noindent
    \textbf{Minimum Vertex Cover}. Since for any graph $G=(V,E)$ a vertex set $S \subseteq V$ is a minimum vertex cover if and only if the set $V \setminus S$ is a maximum independent set, the existence of a deterministic silent protocol for Minimum Vertex Cover would imply the existence of one for
    Maximum Independent Set. Furthermore, as any \emph{maximum} independent set is also a \emph{maximal} independent set, the existence of a desired protocol for Maximum Independent Set would imply the existence of one for Maximal Independent Set, a contradiction.
\end{proof}

In the proof of non-triviality of Maximal Matching we used gadgets of size 3 and in the proofs of non-triviality of all other problems we used gadgets of size 4. Together with Theorem \ref{thm:main} this implies the following

\begin{corollary}
    \begin{enumerate}
        \item For any $\epsilon \leq \frac{1}{36}$ and $k \geq 243$,
        the randomized $\epsilon$-error blackboard communication complexity of Maximal Matching is $\Omega(n)$.

        \item For any $\epsilon \leq \frac{1}{72}$ and $k \geq 768$,
        the randomized $\epsilon$-error blackboard communication complexity of each of the problems Maximal Independent Set,
        Minimum Dominating Set,
        $(\Delta+1)$-coloring, and
        Minimum Vertex Cover is $\Omega(n)$.
    \end{enumerate}
\end{corollary}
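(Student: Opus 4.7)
The corollary is a direct instantiation of the main lower bound theorem (Theorem~\ref{thm:main}) for each of the specific problems, using the gadget sizes established in the non-triviality lemmas. My plan is therefore to simply substitute the parameters and check the two sets of numerical constants.

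First, I would recall that Theorem~\ref{thm:main} says: if $\textsc{P}$ is non-trivial with witness integer $t = t(\textsc{P})$, then for $k \geq 3t^4$ and $\epsilon \leq \delta(\textsc{P}) = \tfrac{1}{12\binom{t}{2}}$, the randomized $\epsilon$-error blackboard communication complexity of $\textsc{P}$ is $\Omega(n)$. So the task reduces to reading off the gadget size from the lemmas in Section~\ref{sec:lb-silent} and computing $3t^4$ and $\tfrac{1}{12\binom{t}{2}}$ in each case.

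For part~(1), the Maximal Matching lemma uses gadgets on $t = 3$ vertices (the argument only needs that a two-edge path or two disjoint edges are possible among the $t$ vertices, which is available from $t \geq 3$). Plugging $t = 3$ into Theorem~\ref{thm:main} yields the threshold $k \geq 3 \cdot 3^4 = 243$ and the error bound $\epsilon \leq \tfrac{1}{12\binom{3}{2}} = \tfrac{1}{36}$, matching the stated constants. For part~(2), Lemma~\ref{lem:maxIndSet} establishes non-triviality of \textsf{Maximal Independent Set}, \textsf{Minimum Dominating Set}, $(\Delta+1)$-coloring, and \textsf{Minimum Vertex Cover} via gadgets on $t = 4$ vertices (four distinct vertices are needed to produce six distinct pairs and evade the images of the function $f$). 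Substituting $t = 4$ gives $k \geq 3 \cdot 4^4 = 768$ and $\epsilon \leq \tfrac{1}{12\binom{4}{2}} = \tfrac{1}{72}$, again matching the statement.

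Since every step is an arithmetic substitution into already-proved results, there is no genuine obstacle here: the only thing to verify carefully is that the gadget sizes used in the non-triviality proofs really are $t = 3$ for Maximal Matching and $t = 4$ for the remaining four problems, and that Theorem~\ref{thm:main} can be invoked with these values (i.e.\ the hypotheses $k \geq 3t^4$ and $\epsilon \leq \delta(\textsc{P})$ are exactly the conditions in the corollary). The two-part structure of the corollary therefore follows by applying Theorem~\ref{thm:main} once with $t = 3$ and once with $t = 4$, noting that for the four problems in part~(2) the same value of $t$ works simultaneously.
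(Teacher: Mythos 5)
Your proposal is correct and matches the paper's own (very brief) argument: the paper, immediately before the corollary, remarks that the non-triviality proofs use gadgets of size $3$ for Maximal Matching and size $4$ for the others, and then invokes Theorem~\ref{thm:main}. Your arithmetic ($3t^4 = 243$, $\tfrac{1}{12\binom{3}{2}} = \tfrac{1}{36}$ for $t=3$; $3t^4 = 768$, $\tfrac{1}{12\binom{4}{2}} = \tfrac{1}{72}$ for $t=4$) is exactly the instantiation the paper intends.
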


\section{Upper Bounds} \label{sec:upper-bounds}
In this section, we assume that the partition function $z$ is arbitrary (potentially adversarially chosen), and each party $i$ only knows the edges allocated to them, i.e., the set of edges $E_i = \{ uv \in E \ : \ z(u,v) = i \}$.
Thus, in contrast to our lower bounds, the error probability of a protocol is now only computed over the private and public coin flips.

\subsection{Maximal Matching and Maximum Matching Approximation}
Computing a maximal matching in the \bb(k) model with communication cost $O(n \log n)$ is straightforward. The parties can simulate the \textsc{Greedy} matching algorithm, as follows: The blackboard serves as a variable $M$ that contains the current matching. Initially, the blackboard is empty and represents the empty set, i.e., $M \gets \varnothing$. The parties proceed in order. At party $i$'s turn, party $i$ attempts to add each of its edges $e \in E_i$ in any order to the blackboard if possible, i.e., if $M \cup \{e \}$ is still a matching. Since any matching in an $n$-vertex graph is of size at most $n/2$, and each edge requires space $O(\log n)$ to be written onto the blackboard (e.g., by indicating the two endpoints of the edge), the communication cost is $O(n \log n)$.

Various algorithms are known for computing a $(1-\epsilon)$-approximation to Maximum Matching by repeatedly computing maximal matchings in subgraphs of the input graph (e.g. \cite{m05,ag11,ekms12,k18,kk20,alt21}). In order to implement these algorithms in the \bb(k) model, we need to make sure that the $k$ parties know which of their edges are contained in each of the respective subgraphs. If this can be achieved with communication cost $O(s)$ per subgraph, then a \bb(k) model implementation with cost $O((n \log n + s) \cdot r)$ can be obtained, where $r$ is the number of matchings computed.

The algorithm by McGregor \cite{m05} follows this scheme and relies on the computation of  $\frac{1}{\epsilon}^{O(\frac{1}{\epsilon})}$ maximal matchings in order to establish a $(1-\epsilon)$-approximate matching in general graphs. Specifying the respective subgraphs with small communication cost is straightforward for this algorithm, requiring a cost of $O(n \log n)$ (details omitted), and we thus obtain a $(1-\epsilon)$-approximation \bb(k) model algorithm with communication cost $O\left(n \log n \cdot \frac{1}{\epsilon}^{O(\frac{1}{\epsilon})}\right)$.

The dependency on $\epsilon$ can be improved in the case of bipartite graphs. Assadi et al. \cite{alt21} very recently gave an algorithm for Maximum Matching in bipartite graphs that solely relies on the computation of $O(\frac{1}{\epsilon^2})$ maximal matchings in subgraphs of the input graph.
Specifying the respective subgraphs, however, is slightly more involved, but can be done with communication cost $O\left(n \frac{1}{\epsilon} \log(\frac{1}{\epsilon})\right)$. Taking this into account, a \bb(k) model protocol with communication cost $O\left( \left(n \log n + n \frac{1}{\epsilon} \log(\frac{1}{\epsilon}) \right) \cdot \frac{1}{\epsilon^2}\right)$ can be obtained.

\subsection{Maximal Independent Set}
We now discuss how the random order \textsc{Greedy} MIS algorithm can be implemented in the \bb(k) model with communication cost $O(n \poly \log n)$. A similar implementation in the massively parallel computation and the congested clique models was previously discussed by Ghaffari et al. \cite{ggkmr18}. The key idea of this implementation, i.e.,  a residual sparsity property of the random order \textsc{Greedy} algorithm, goes back to the multi-pass correlation clustering streaming algorithm by Ahn et al. \cite{acgmw15} and have since been used at multiple occasions \cite{k18,gkms19}.

The random order \textsc{Greedy} MIS algorithm proceeds as follows: First, the algorithm identifies a uniform random permutation $\pi: V \rightarrow [n]$, assigning each vertex a rank. Denote by $v_i$ the vertex with rank $i$. The algorithm processes the vertices $V$ in the order specified by $\pi$. When processing vertex $v_i$, the algorithm attempts to insert $v_i$ into an initially empty independent set $I$ if possible, i.e., if $I \cup \{v_i \}$ is an independent set.

To simulate this algorithm in the \bb(k) model, the $k$ parties first agree on a uniform random permutation $\pi$ using public randomness. The simulation then proceeds in $O(\log \log n)$ phases, where in phase $i$ \textsc{Greedy} is simulated on vertices $V_i := \{v_j \ : \ k_{i-1} \le j < k_i \}$, for
$$k_i = n^{1-\frac{1}{2^i}} \ .$$
Denote by $I_i$ the independent set computed after phase $i$ (i.e., after having processed all vertices with ranks smaller than $k_i$), and let $I_0 = \{ \}$. In each phase $i$, the parties write the subgraph
induced by vertices $V_i \setminus N(I_{i-1})$ onto the blackboard, where $N(I_{i-1})$ denotes the neighborhood of $I_{i-1}$. This information, together with $I_{i-1}$, then allows all parties to locally continue the simulation of \textsc{Greedy} on vertices $V_i$, resulting in independent set $I_i$. As proved by Ghaffari et al., the subgraph induced by vertices $V_i \setminus N(I_{i-1})$ has $O(n \poly \log n)$ edges with high probability. Since the parties know $I_{i-1}$ from the local simulations of the previous phase, they know which of their edges are contained in subgraph $G[V_i \setminus N(I_{i-1})]$ and write all of these edges onto the blackboard.

After $O(\log \log n)$ phases, it can be seen that the graph induced by vertices with rank larger than $k_{O(\log \log n)}$ has $O(n \poly \log n)$ edges. The parties then write this subgraph onto the blackboard, which allows all parties to complete their local simulations of \textsc{Greedy}.

The communication cost is dominated by the vertex-induced subgraphs written onto the blackboard. Each of these subgraphs contains $O(n \poly \log n)$ edges, and, since there are $O(\log \log n)$ phases, the total communication cost is $O(n \poly \log n)$.

\subsection{$(\Delta+1)$-coloring}
We now discuss how the one-pass $O(n \log^3 n)$-space streaming algorithm by Assadi et al. \cite{ack19} for $(\Delta+1)$-coloring can be implemented in the $\bb(k)$ model.

Assadi et al.'s streaming algorithm assumes that $\Delta$ is known prior to the processing of the stream. For each vertex $v \in V$, the algorithm first samples uniform random $O(\log n)$ colors $P_v \subseteq [ \Delta + 1]$ from the color palette $[\Delta+1]$. Then, while processing the stream, the algorithm maintains the set of edges $uv \in E$ such that $P_u \cap P_v \neq \emptyset$, i.e., the palettes of $u$ and $v$ intersect. Denote this set of edges by $E_{\text{conflict}}$. Assadi et al. prove that $|E_{\text{conflict}}| = O(n \log^2 n)$ w.h.p., and there exists a coloring $\chi: V \rightarrow [\Delta+1]$ of the conflict graph $G_{\text{conflict}} = (V, E_{\text{conflict}})$ such that every vertex $v$ is assigned a color from $P_v$ w.h.p., i.e., $\chi(v) \in P_v$. It is easy to see that this coloring is also a valid coloring of the input graph $G$. Since $|E_{\text{conflict}}| = O(n \log^2 n)$, and space $O(\log n)$ is accounted for storing an edge, the space complexity of this algorithm is $O(n \log^3 n)$.

We implement this algorithm in the \bb(k) model as follows. The parties maintain a guess $\Delta'$ of $\Delta$ that is initially set to $\lceil (n-1)/2 \rceil$ and updated according to a binary search in the range $[n-1]$. In each step of the binary search, our algorithm simulates Assadi et al.'s algorithm using the guess $\Delta'$ instead of $\Delta$. To this end, first, the parties use public randomness to determine the palettes $P_v$ for every vertex $v$. Knowing all the vertices' palettes, every party $i$ is able to identify which of their edges are contained in $E_{\text{conflict}}$. Denote the conflict edges of party $i$ by $E_{\text{conflict}}^i$. Then, every party writes $E_{\text{conflict}}^i$ to the blackboard. Next, the parties locally each compute the same coloring $\chi_{\Delta'}: V \rightarrow [\Delta' + 1]$ of the conflict graph respecting the color palettes of the vertices, if such a coloring exists. If such a coloring exists, the parties conclude that $\Delta \le \Delta'$, and if such a coloring does not exist, the parties conclude that $\Delta > \Delta'$. The parties continue the binary search and output the coloring computed for the smallest value of $\Delta'$. This coloring is clearly a valid $(\Delta+1)$-coloring.

Each iteration of the binary search requires writing the conflict graph onto the blackboard, which requires $O(n \log^3 n)$ bits. Since the binary search requires $O(\log n)$ rounds, this protocol has communication cost $O(n \log^4 n)$ bits.

\section{Conclusion} \label{sec:conclusion}
In this paper, we gave a new lower bound technique that allowed us to prove $\Omega(n)$ lower bounds on the communication complexity of most graph problems in the $\bb(k)$ model, even if the edges are assigned uniformly at random to the parties. The strength of our approach is its wide applicability and the robustness under the random assignment of edges. We also showed that our lower bounds are tight, up to poly-logarithmic factors, for \textsf{Maximal Independent Set}, \textsf{Maximal Matching}, and $(\Delta+1)$-coloring.

We conclude with pointing out a connection between the two-party communication setting and the $\bb(k)$ model. It is not hard to see that a $\bb(k)$ model protocol $\mathcal{P}$ can be implemented in the two-party communication setting with the same (or less) communication cost. Hence, lower bounds in the two-party communication model translate to the $\bb(k)$ model. While some strong lower bounds in the two-party communication setting are known, such as the lower bound by Halld\'{o}rsson et al. \cite{hssw12} who showed that computing an $\alpha$-approximation to Maximum Independent Set requires communication cost $\tilde{\Omega}(n^2 / \alpha^2)$, these lower bounds do not hold under the random edge assignment assumption. This begs the following question: How can we prove strictly stronger lower bounds $\omega(n)$ for graph problems in the $\bb(k)$ model if edges are assigned uniformly at random to the parties?

\bibliographystyle{ACM-Reference-Format}
\bibliography{krz}

\end{document}